\setlist[enumerate]{itemsep=0.1mm, parsep=0mm}
\tiny\color{gray},
\declaretheoremstyle[
 headfont=\bfseries
]{customstyle}
\declaretheorem[style=customstyle]{definition}
\newtheorem{alemma}{Lemma}[section]
\theoremstyle{definition}
\newtheorem{remark}{Remark}
\newcommand{\mech}{\mathcal{M}}
\newcommand{\epsdel}{(\varepsilon, \delta)}
\newcommand{\epsdeleps}{(\varepsilon, \delta(\varepsilon))}
\newcommand{\deleps}{\delta(\varepsilon)}
\newcommand{\prob}[1]{\mathrm{Pr}\left( #1 \right)}
\newcommand{\re}{\mathrm{e}}
\newcommand{\rd}{\mathrm{d}}
\newcommand{\eps}{\varepsilon}
\newcommand{\textand}{\; \text{and} \;}
\newcommand{\aphi}{\alpha_{\phi}}
\newcommand{\bphi}{\beta_{\phi}}
\newcommand{\bE}{\mathbb{E}}
\newcommand{\wbx}{\widebar{X}}
\newcommand{\wby}{\widebar{Y}}
\newcommand{\dtv}{\mathsf{TV}}
\newcommand{\rmin}{R_{\min}}
\newcommand{\forallalpha}{\; \forall \alpha \in [0,1]}
\newcommand{\forallpi}{\; \forall \pi \in [0,1]}
\newcommand{\rast}{R^{\ast}}
\newcommand{\mpp}{\mech_{\mathrm{PP}}}
\newcommand{\mbnp}{\mech_{\mathrm{BNP}}}
\newcommand{\renyidiv}[3]{\mathsf{D}_{#1}(#2 \mkern-5mu \parallel \mkern-5mu #3)}
\newcommand{\fD}{\mathfrak{D}}
\newcommand{\deltadiv}[2]{\Delta(#1 \mkern-5mu \parallel \mkern-5mu #2)}
\newcommand{\deltatwosided}[2]{\Delta^{\leftrightarrow}(#1 \mkern-5mu \parallel \mkern-5mu #2)}
\newcommand{\mypar}[1]{\textbf{#1} \,}
\icmltitlerunning{Beyond the Calibration Point: Mechanism Comparison in Differential Privacy}
\begin{document}

\twocolumn[
\icmltitle{Beyond the Calibration Point: Mechanism Comparison in Differential Privacy}

\icmlsetsymbol{equal}{*}

\begin{icmlauthorlist}
\icmlauthor{Georgios Kaissis}{equal,tum}
\icmlauthor{Stefan Kolek}{equal,lmu}
\icmlauthor{Borja Balle}{gdm}
\icmlauthor{Jamie Hayes}{gdm}
\icmlauthor{Daniel Rueckert}{tum}
\end{icmlauthorlist}

\icmlaffiliation{tum}{AI in Healthcare and Medicine and Institute of Radiology, Technical University of Munich, Germany}
\icmlaffiliation{lmu}{Mathematical Foundations of AI, LMU Munich}
\icmlaffiliation{gdm}{Google DeepMind}

\icmlcorrespondingauthor{Georgios Kaissis}{g.kaissis@tum.de}

\icmlkeywords{Machine Learning, ICML}

\vskip 0.3in
]



\printAffiliationsAndNotice{\icmlEqualContribution} 

\begin{abstract}
In differentially private (DP) machine learning, the privacy guarantees of DP mechanisms are often reported and compared on the basis of a single $\epsdel$-pair. 
This practice overlooks that DP guarantees can vary substantially \emph{even between mechanisms sharing a given $\epsdel$}, and potentially introduces privacy vulnerabilities which can remain undetected.
This motivates the need for robust, rigorous methods for comparing DP guarantees in such cases.
Here, we introduce the $\Delta$-divergence between mechanisms which quantifies the worst-case excess privacy vulnerability of choosing one mechanism over another in terms of $\epsdel$, $f$-DP and in terms of a newly presented Bayesian interpretation.
Moreover, as a generalisation of the Blackwell theorem, it is endowed with strong decision-theoretic foundations.
Through application examples, we show that our techniques can facilitate informed decision-making and reveal gaps in the current understanding of privacy risks, as current practices in DP-SGD often result in choosing mechanisms with high excess privacy vulnerabilities.
\end{abstract}

\section{Introduction}
Protecting private information in machine learning (ML) workflows involving sensitive data is of paramount importance.
Differential Privacy (DP) has emerged as the preferred method for providing rigorous and verifiable privacy guarantees, quantifiable by a \emph{privacy budget}. 
This represents the privacy loss incurred by publicly releasing data that has been processed by a system using DP, e.g.\@ when a deep learning model is trained on sensitive data using DP stochastic gradient descent (DP-SGD, \citep{abadi2016deep}).
In principle, workflows utilising DP can offer strong protection against specific attacks, such as membership inference (MIA) and data reconstruction attacks. 
However, the proper application of DP to defend against such threats relies on a correct understanding of the quantitative aspects of privacy protection, which are expressed differently under the various DP interpretations.
For instance, in approximate DP, the privacy budget is quantified using two parameters $\epsdel$.
Most relevant DP mechanisms, e.g.\@ the subsampled Gaussian mechanism (SGM) typically used in DP-SGD, satisfy DP across a \emph{continuum} of $\epsdeleps$-values rather than a single $\epsdel$ tuple.
For these mechanisms, $\delta$ is a function of $\eps$, represented as the \emph{privacy profile} \citep{balle2020privacy}. 
An equivalent (\emph{dual}) functional view is expressed by the trade-off function in $f$-DP \citep{dong2022gaussian}.

However, despite the fact that the DP guarantee of such mechanisms can only be characterised by a collection of $\epsdel$-values, it is common practice in literature to calibrate against and report \emph{a single} $\epsdel$-pair to express the privacy guarantee of a DP mechanism \citep{abadi2016deep, papernot2021tempered, de2022unlocking}.
This highlights a potential misconception that such a single pair is sufficient to fully characterise or compare DP guarantees.
This assumption is not generally true, as mechanisms can conform to the same $\epsdel$-values but still differ significantly, as seen in \cref{fig:teaser}.
In other words: \emph{two DP mechanisms can be calibrated to share an $\epsdel$-guarantee while offering substantially different privacy protections}.

\begin{figure}[ht]
  \centering
  \includegraphics[width=0.9\columnwidth]{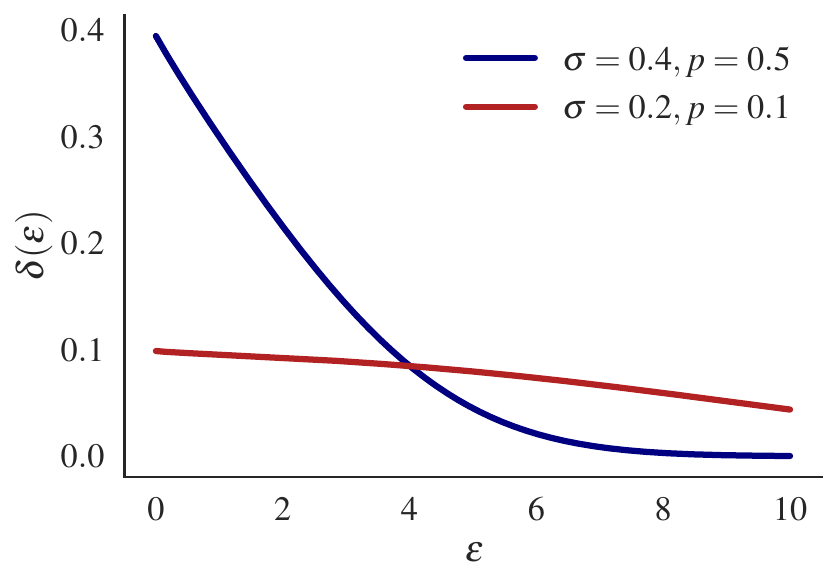}
  \caption{
  Privacy profiles of two SGMs with different noise scales $\sigma$ and sampling rates $p$.
  Both satisfy $(4.00, 0.08)$-DP, but offer otherwise different levels of privacy protection.
  }
  \label{fig:teaser}
\end{figure}

This leads us to ask whether interpreting and/or comparing the privacy guarantees of DP mechanisms based on their behaviours at a single $\epsdel$-tuple can lead to privacy vulnerabilities.
An affirmative answer is suggested by the recent work of \citet{hayes2023bounding} on reconstruction attacks. 
Therein, the authors demonstrate that calibrating two SGMs with different parameters to meet the same $\epsdel$-guarantee as shown above results in disparate effectiveness against reconstruction attacks.
In practice, this can occur when the user simultaneously increases the sampling rate (e.g.\@ to utilise all available GPU memory) and the noise scale in an attempt to maintain the same $\epsdel$-DP guarantee. 
In reality, the privacy guarantee has been changed \emph{everywhere except the calibration point} (i.e. the $\epsdel$-tuple in question), weakening the model's protection against data reconstruction attacks.
Similar evidence was presented by \citet{lokna2023group}, where it was shown that a single $\epsdel$-pair is insufficient to fully characterise a mechanism's protection against MIA. 
Both examples illustrate that differences between DP guarantees which remain undetected by only considering a single $\epsdel$-pair can lead to privacy hazards.

This reflects an unmet requirement for tools to quantitatively compare the privacy guarantees offered by DP mechanisms in a principled manner.
Most existing techniques for comparing DP guarantees either rely on summarisation into a single scalar (which can discard information), on average-case metrics or on assumptions, thus lacking the required generality. 
The arguably most theoretically rigorous mechanism comparison technique relies on the so-called \emph{Blackwell theorem}, which allows for comparing the privacy guarantees in a strong, decision-theoretic sense. 
However, the Blackwell theorem is exclusively applicable to the special case in which the privacy guarantees of two mechanisms coincide nowhere, i.e.\@ when their trade-off functions/privacy profiles \emph{never cross}, excluding, among others, DP-SGD, as shown above.
To thus extend rigorous mechanism comparisons to this important setting, a set of novel techniques is required, which our work introduces through the following contributions.

\mypar{Contributions}
To enable principled comparisons between mechanism whose privacy guarantees coincide at a single point but differ elsewhere, we generalise Blackwell's theorem by introducing an \emph{approximate} ordering between DP mechanisms.
This ordering, which we express through the newly presented \emph{$\Delta$-divergence} between mechanisms, quantifies the worst-case increase in privacy vulnerability incurred by choosing one mechanism over another in terms of hypothesis testing errors, $\deleps$, and in terms of a novel \textit{Bayes error interpretation}. 
The latter is a probabilistic extension of the hypothesis testing interpretation of DP and allows for principled reasoning over the capabilities of DP adversaries.
In addition, we analyse the evolution of approximate comparisons into universal comparisons under composition, yielding insights into the privacy dynamics of algorithms like DP-SGD.
Finally, we experimentally show how our techniques can facilitate a more granular privacy analysis of private ML workflows, and pinpoint vulnerabilities which remain undetected by only focusing on a single $\epsdel$-pair.

\mypar{Related Work}
Blackwell's theorem \citep{blackwell1953equivalent} originates in the theory of comparisons between information structures called \emph{statistical experiments}, and describes conditions under which one statistical experiment is universally more informative than another. 
Blackwell's framework was later expanded by \citet{le1964sufficiency, torgersen1991comparison}, and we refer to the latter for a comprehensive overview of the field.
The equivalence between a subclass of statistical experiments (binary experiments) and the decision problem faced by the MIA adversary led \citet{dong2022gaussian} to leverage the Blackwell theorem to provide conditions under which one DP mechanism is \emph{universally} more private than another. 
This limits mechanism comparisons to the special case when the mechanisms' trade-off functions (or privacy profiles \citep{balle2020privacy}) never cross.
However, as demonstrated above, crossing trade-off functions or privacy profiles are \emph{not the exception but the norm}; however, no specific tools to compare privacy guarantees in this case are introduced by \citet{dong2022gaussian}. 

As discussed above, privacy guarantees have so far often be compared using metrics like attack accuracy or area under the trade-off curve (see \citet{carlini2022membership} for a list of works). 
Besides summarising the privacy guarantee into a single scalar (thus discarding much of the information about the DP mechanism contained in the privacy profile or trade-off function), such metrics model the average case instead of the desirable worst case, rendering them sub-optimal for DP applications.
To remedy this, \citet{carlini2022membership} proposed comparing attack performance at a \say{low} Type-I error. 
However, this method requires an arbitrary assumption about the correct choice of a \say{low} Type-I error rather than considering the entire potential operating range of an adversary, thereby also discarding information.
Moreover, absent a universally agreed upon standard of what a correct choice of Type-I error is, this could incentivise the reporting of research results at a Type-I error which is \say{cherry-picked} to e.g.\@ emphasise the benefits of a newly introduced MIA, i.e.\@ \textit{$p$-hacking} \cite{Wasserstein2016}. 
 
The concept of comparing the privacy properties or \say{leakage} of systems has precedent in the \textit{quantitative information flow} literature \citep{alvim2020science}; for example, the minimax Bayes error in \cref{eq:minimax-bayes-error} is related to the Bayes security metric introduced in \citet{chatzikokolakis2023bayes}.

\mypar{Notation and Background}
Here, we briefly introduce the notation and relevant concepts used throughout the paper for readers with technical familiarity with DP terminology.
A detailed background discussion introducing all following concepts can be found in \cref{sec:extended-Background}.
We will denote DP mechanisms by $\mech:(P,Q)$, where $(P,Q)$ denote the \emph{tightly dominating pair} of probability distributions which characterise the mechanism as described in \citet{zhu2022optimal}, and will assume that $P$ and $Q$ are mutually absolutely continuous.
The Likelihood Ratios (LRs) will be denoted $\wbx = \nicefrac{Q(\omega)}{P(\omega)}, \omega \sim P$ and $\wby = \nicefrac{Q(\omega)}{P(\omega)}, \omega \sim Q$ for a mechanism outcome $\omega$, where $\sim$ denotes sampling, and the Privacy Loss Random Variables (PLRVs) will be denoted $X = \log(\wbx)$ and $Y = \log(\wby)$.
We will denote the trade-off function \citep{dong2022gaussian} corresponding to $\mech$ by $f : \alpha \mapsto \beta(\alpha)$, where $(\alpha, \beta(\alpha))$ are the Type-I/II errors of the most powerful test between $P$ and $Q$ with null hypothesis $H_0 : \omega \sim P$ and alternative hypothesis $H_1 : \omega \sim Q$, and $\alpha$ is fixed by the adversary.
We will assume without loss of generality that $f$ is symmetric (thereby omitting the dominating pair $(Q,P)$), and defined on $\mathbb{R}$ with $f(x) = 1, x<0$ and $f(x) = 0, x>1$.

The privacy profile \cite{balle2020privacy, gopi2021numerical} of $\mech$ will be denoted by $\deleps$, while the $N$-fold self-composition of $\mech$ (as is usually practised in DP-SGD \citep{abadi2016deep}) will be denoted by $\mech^{\otimes N}$.
We will moreover denote the total variation distance between $P$ and $Q$ by $\dtv(P,Q) = \max_{\alpha} (1 - \alpha - f(\alpha)) = \mathsf{Adv}$, where $\mathsf{Adv}$ is the MIA advantage \cite{yeom2018privacy}, and the Rényi divergence of order $t$ of $P$ to $Q$ by $\renyidiv{t}{P}{Q}$ \citep{mironov2017renyi}.
The party employing a DP mechanism to protect privacy will be referred to as the \emph{analyst} or \emph{defender}.

\section{A Bayesian Interpretation of $f$-DP}
We begin by introducing a novel interpretation of $f$-DP based on the \emph{minimum Bayes error} of a MIA adversary.
While $f$-DP characterises mechanisms through their trade-off between hypothesis testing errors, our interpretation enriches this characterisation by incorporating the adversary's \emph{prior knowledge} (i.e. auxiliary information).
As will become evident below, this allows for incorporating probabilistic reasoning over the adversary and facilitates intuitive operational interpretations of mechanism comparisons, while preserving the same information as $f$-DP.

Suppose that a Bayesian adversary assigns a prior probability $\pi$ to the decision \say{reject $H_0$}.
Considering that the adversary's goal is a successful MIA on a specific \emph{challenge example}, $H_0$ is synonymous with the hypothesis \say{the mechanism outcome was generated from the database which does not contain the challenge example}.
Thus, the prior on rejecting $H_0$ expresses the prior belief that the challenge example is actually part of the database (i.e.\@ a prior probability of positive membership).
For example, in privacy auditing (where the analyst assumes the role of the adversary), $\pi$ corresponds to the probability of including the challenge example (also called \say{canary}) in the database which is attacked \citep{carlini2022membership, nasr2023tight}.

From the trade-off function, the Bayes error $R$ at a prior $\pi$ can be obtained as follows:
\begin{equation}\label{eq:bayes-risk-definition}
  R(\pi) = \pi \alpha + (1-\pi) f(\alpha),
\end{equation}
where it is implied that the adversary fixes a level of Type I error $\alpha$.
The \emph{minimum Bayes error function} is derived from the above by minimising over the trade-off between Type I and Type II errors:
\begin{equation}
  \rmin(\pi) = \min_{\alpha}\left(\pi \alpha + (1-\pi)f(\alpha)\right).
\end{equation}
We will refer to $\rmin$ as just the \emph{Bayes error function} for short.
$\rmin$ is continuous, concave, maps $[0,1] \to [0, \nicefrac{1}{2}]$, satisfies $\rmin(0) = \rmin(1) = 0$, and $\rmin(\pi) \leq \min\{ \pi, 1-\pi\}$.
The \emph{minimax Bayes error} $\rast$ is the maximum of $\rmin$ over all values of $\pi \in [0,1]$:
\begin{equation}\label{eq:minimax-bayes-error}
  \rast = \max_{\pi} \rmin(\pi).
\end{equation}
$\rast$ is realised at $\pi=\nicefrac{1}{2}$ since $f$ is assumed symmetric.

$\rmin$ is a lossless representation of the mechanism's privacy properties as $f$ can be reconstructed from $\rmin$ as follows: 
\begin{equation}
  f(\alpha) = \max_{0\leq \pi<1} \left(-\frac{\pi}{1-\pi}\alpha + \frac{R_{\min}(\pi)}{1-\pi}\right).
\end{equation}
For examples of $\rmin$, see \cref{fig:comparison-demo} and \cref{fig:tradeoff-bayes} in the Appendix.

\section{Blackwell Comparisons}
\subsection{Universal Blackwell Dominance}
As stated above, the Blackwell theorem states equivalent conditions under which a mechanism $\mech$ is \emph{universally more informative/less private} than a mechanism $\widetilde \mech$, denoted $\mech \succeq \widetilde \mech$ from now on.
For completeness, we briefly re-state these conditions here, and extend them to include our novel Bayes error interpretation.

\begin{restatable}{theorem}{blackwelltheorem}\label{thm:blackwell}
  The following statements are equivalent:
  \begin{enumerate}
    \item $\forall \alpha \in[0,1]: f(\alpha) \leq \widetilde f(\alpha)$;
    \item $\forall \eps \in \mathbb{R}: \deleps \geq \widetilde \delta(\eps) $;
    \item $\forall \pi \in [0,1]: \rmin(\pi) \leq \widetilde R_{\min}(\pi)$.
  \end{enumerate}
\end{restatable}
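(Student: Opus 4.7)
The plan is to take statement (1) as the pivot and establish the pairwise equivalences (1) $\Leftrightarrow$ (2) and (1) $\Leftrightarrow$ (3), exploiting the fact that $f$ is a valid trade-off function (hence convex, decreasing, and bounded by the identity $1-\alpha$) so that the conversion formulas between $f$, $\delta(\varepsilon)$, and $R_{\min}$ can be inverted via convex duality.

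For (1) $\Leftrightarrow$ (2), I would invoke the duality between the trade-off function and the privacy profile established in the $f$-DP literature: $\delta(\varepsilon) = \max_{\alpha \in [0,1]} \bigl( 1 - \re^{\varepsilon}\alpha - f(\alpha) \bigr)$ and, conversely, $f(\alpha) = \sup_{\varepsilon \in \mathbb{R}} \bigl( 1 - \re^{\varepsilon}\alpha - \delta(\varepsilon) \bigr)$. The forward direction ($f \leq \widetilde f$ pointwise implies $\delta \geq \widetilde \delta$ pointwise) follows because replacing $f$ by the smaller $\widetilde f$ inside the supremum defining $\widetilde\delta(\varepsilon)$ yields a larger value at every $\alpha$, hence a larger maximum. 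The backward direction is symmetric using the dual formula, with convexity of $f$ and $\widetilde f$ guaranteeing that the Legendre-type biconjugate recovers them exactly.

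For (1) $\Leftrightarrow$ (3), the forward implication is immediate from the definition: if $f(\alpha) \leq \widetilde f(\alpha)$ for every $\alpha$, then for each fixed $\pi \in [0,1]$,
\begin{equation*}
\pi \alpha + (1-\pi) f(\alpha) \leq \pi \alpha + (1-\pi)\widetilde f(\alpha) \forallalpha,
\end{equation*}
and minimising over $\alpha$ yields $\rmin(\pi) \leq \widetilde R_{\min}(\pi)$. The reverse implication uses the reconstruction formula stated in the excerpt, namely $f(\alpha) = \max_{0 \leq \pi < 1} \bigl( -\tfrac{\pi}{1-\pi}\alpha + \tfrac{\rmin(\pi)}{1-\pi} \bigr)$: if $\rmin \leq \widetilde R_{\min}$ pointwise, the expression inside the maximum is pointwise smaller than the analogous one for $\widetilde f$, so the maxima satisfy $f(\alpha) \leq \widetilde f(\alpha)$.

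The main obstacle I anticipate is justifying that the two duality/reconstruction formulas are tight, i.e.\ that they actually recover $f$ (respectively $\delta(\varepsilon)$) and are not merely lower bounds. This reduces to invoking convexity of $f$ together with its boundary behaviour ($f(\alpha) = 1-\alpha$ is the maximal symmetric trade-off function and $f(\alpha) \leq 1 - \alpha$ for all valid $f$), so that the Legendre--Fenchel biconjugate equals the original function. Everything else is monotonicity of the supremum under pointwise ordering, which is routine.
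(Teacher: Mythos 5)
Your proof is correct and rests on essentially the same machinery as the paper's: the paper obtains this theorem as the $\fD=0$ case of \cref{thm:approx-comparison}, whose proof likewise runs through the convex-conjugate duality $\delta(\eps)=1+f^{\ast}(-\re^{\eps})$ (with its order-reversing/biconjugation tightness) for (1)$\Leftrightarrow$(2) and the trade-off/Bayes-error duality for (1)$\Leftrightarrow$(3). The only minor differences are cosmetic: you argue the exact case directly, using the stated $\rmin$-to-$f$ reconstruction formula for (3)$\Rightarrow$(1) where the paper instead picks a supporting prior attaining the minimum, and in your (1)$\Rightarrow$(2) step the roles are momentarily swapped in wording ($\widetilde f$ is the \emph{larger} function under condition (1)), though the inequality you conclude is the right one.
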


The proofs of clause (1) and (2) can be found in Sections 2.3 and 2.4 of \citet{dong2022gaussian}, while the proof of (3) and all following theoretical results can be found in \cref{subsec:proofs}.

If any of the above conditions hold, we write $\mech \succeq \widetilde \mech$ and say that \emph{$\mech$ Blackwell dominates $\widetilde \mech$}. 
Note the lack of a clause related to Rényi DP (RDP), which is a consequence of the fact that, while $\mech \succeq \widetilde \mech$ implies that $\renyidiv{t}{P}{Q} \geq \renyidiv{t}{\widetilde P}{\widetilde Q}$, for all $t\geq 1$, the reverse does not hold in general \citep{dong2022gaussian}.
RDP is thus a \emph{generally weaker basis of comparison} between DP mechanisms.

The relation $\succeq$ induces a partial order on the space of DP mechanisms and expresses a strong condition, as it implies that the dominating mechanism is more useful for \emph{any} downstream task, benign (e.g.\@ training an ML model) or malicious (e.g.\@ privacy attacks) \citep{dong2022gaussian, torgersen1991comparison}.
Note that \cref{thm:blackwell} is inapplicable when the trade-off functions, privacy profiles or Bayes error functions cross. 
Addressing this issue is the topic of the rest of the paper.

\subsection{Approximate Blackwell Dominance}\label{sec:approx-comparisons}
As discussed above, Blackwell dominance expresses that choosing the dominated mechanism is, in a universal sense, a better choice in terms of privacy protection.
In other words, an analyst choosing the dominated mechanism would \emph{never regret} this choice from a privacy perspective.
However, more frequently, the choice between mechanisms is equivocal because their privacy guarantees coincide at the calibration point, but differ elsewhere.
They thus offer disparate protection against different adversaries, meaning that no choice fully eliminates potential regret in terms of privacy vulnerability.
A natural decision strategy under the principle of DP to protect against the worst case is to choose the mechanism which minimises the \emph{worst-case regret} in terms of privacy vulnerability.
To formalise this strategy, we next introduce a relaxation of the Blackwell theorem. 
Similar to how approximate DP relaxes pure DP, we term comparisons using this relaxation \emph{approximate Blackwell comparisons}.\footnote{A related term in the experimental comparisons literature is \say{deficiencies} \citep{le1964sufficiency}.}

To motivate this formalisation within the DP threat model, suppose that an analyst must choose between $\mech$ and $\widetilde \mech$, however they cannot unequivocally decide between them because \emph{neither mechanism is universally more or less vulnerable to MIA}.
To express \say{how close} the analyst is to being able to choose unequivocally between the mechanisms (i.e.\@ to Blackwell dominance being restored), we determine the smallest shift $\kappa \geq 0$ which suffices to move $f$ below and to the left of $\widetilde f$ such that \cref{thm:blackwell} kicks in and $\mech \succeq \widetilde \mech$, as shown in \cref{fig:divergence-illustration}.
\begin{figure}[ht]
  \centering
  \includegraphics[width=0.9\columnwidth]{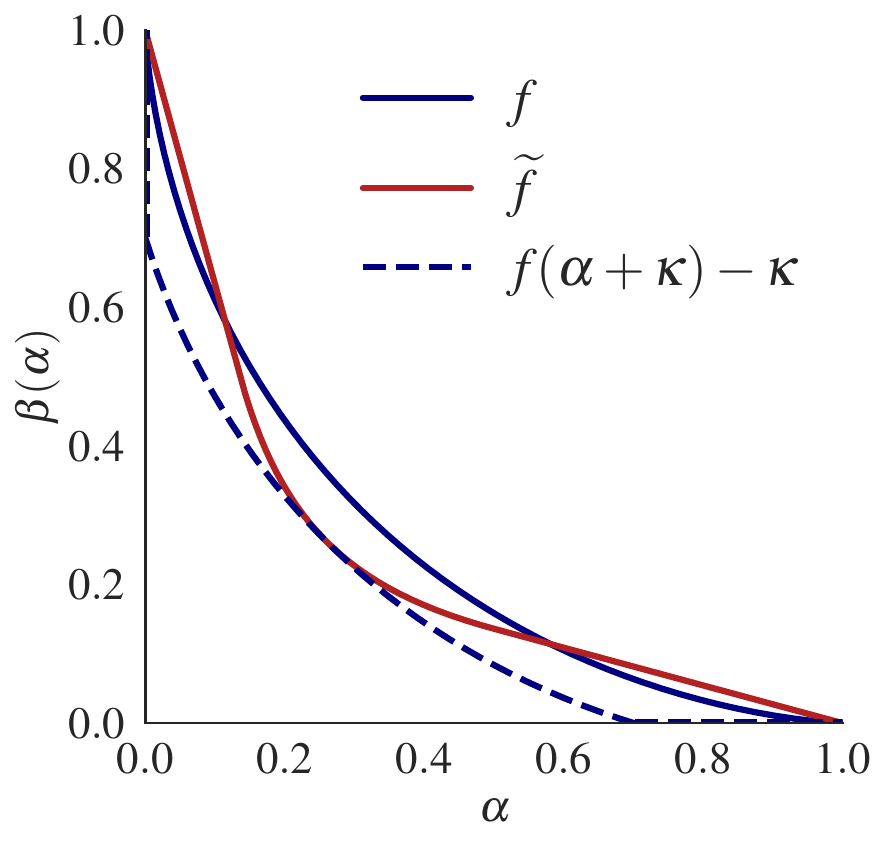}
  \caption{
  The trade-off functions of a Gaussian ($f$, blue) and a Laplace ($\widetilde f$, red) mechanism cross, therefore neither mechanism universally Blackwell dominates.
  Shifting $f$ left and down by $\kappa$ yields $f(\alpha) \leq \widetilde f(\alpha)$ for all $\alpha \in [0,1]$ (dashed blue), and restores universal Blackwell dominance.
  }
  \label{fig:divergence-illustration}
\end{figure}

\begin{definition}
The \emph{$\Delta$-divergence} of $\mech$ to $\widetilde \mech$ is given by
  \begin{equation*}
     \deltadiv{\mech}{\widetilde \mech} = \inf\{ \kappa\geq 0 \mid \forall \alpha: \; f(\alpha + \kappa) - \kappa \leq \widetilde f(\alpha) \}.
  \end{equation*}
\end{definition}
This allows us to define approximate Blackwell dominance:
\begin{definition}
  If $\deltadiv{\mech}{\widetilde \mech} \leq \fD$, we say that $\mech$ $\fD$-approximately dominates $\widetilde \mech$, denoted $\mech \succeq_{\fD} \widetilde \mech$.
\end{definition}

The next theorem formally states equivalent criteria for approximate Blackwell dominance:

\begin{restatable}{theorem}{approxcomp}\label{thm:approx-comparison}
  The following are equivalent to $\mech \succeq_{\fD}\widetilde \mech$:
  \begin{enumerate}
    \item $\forallalpha: f(\alpha + \fD) - \fD \leq \widetilde f(\alpha)$;
    \item $\; \forall \eps \in \mathbb{R}: \delta(\eps) + \fD\cdot(1+\re^{\eps}) \geq \widetilde \delta(\eps)$;
    \item $\forallpi: \rmin(\pi) - \widetilde R_{\min}(\pi) \leq \fD$.
  \end{enumerate}
\end{restatable}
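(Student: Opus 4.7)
Clause (1) just unpacks the definition of $\deltadiv{\mech}{\widetilde\mech} \leq \fD$: the map $\kappa \mapsto f(\alpha + \kappa) - \kappa$ is continuous and non-increasing (since $f$ is non-increasing), so the set $\{\kappa \geq 0 : \forall \alpha,\, f(\alpha+\kappa) - \kappa \leq \widetilde f(\alpha)\}$ is closed and upward-monotone. Hence its infimum is attained, and the infimum is at most $\fD$ iff $\fD$ itself lies in the set, which is exactly (1). The plan is therefore to prove (1) $\Leftrightarrow$ (2) and (1) $\Leftrightarrow$ (3) by exploiting, respectively, the Legendre-type duality between trade-off functions and privacy profiles, and the $f$-$\rmin$ duality stated earlier in the paper.

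For (1) $\Leftrightarrow$ (2), I would rely on the standard conversion pair $\delta(\eps) = \sup_\alpha\{1 - \alpha\re^\eps - f(\alpha)\}$ and $f(\alpha) = \sup_\eps\{1 - \delta(\eps) - \re^\eps \alpha\}$. For (1) $\Rightarrow$ (2), I would substitute the pointwise bound into the supremum defining $\widetilde\delta(\eps)$ and perform the change of variables $\alpha' = \alpha + \fD$: the horizontal shift contributes an extra $\fD\re^\eps$, while the vertical shift adds $\fD$, together yielding the slack $\fD(1+\re^\eps)$ of (2). For (2) $\Rightarrow$ (1), I would insert the profile bound into the dual formula for $\widetilde f(\alpha)$ and run the same computation in reverse, recognising $1 - \delta(\eps) - \re^\eps(\alpha + \fD)$ inside the supremum over $\eps$, which is $f(\alpha + \fD)$.

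For (1) $\Leftrightarrow$ (3), I would repeat the strategy but with the duality already given in the paper: $\rmin(\pi) = \min_\alpha\{\pi\alpha + (1-\pi)f(\alpha)\}$ with inversion $f(\alpha) = \max_{\pi\in[0,1)}\{-\frac{\pi}{1-\pi}\alpha + \frac{\rmin(\pi)}{1-\pi}\}$. For (1) $\Rightarrow$ (3), substituting into $\widetilde R_{\min}(\pi)$ and changing variables $\alpha' = \alpha + \fD$ splits the $\fD$ contribution into $\pi\fD$ from the argument translation and $(1-\pi)\fD$ from the vertical shift, which sum cleanly to $\fD$. For (3) $\Rightarrow$ (1), plugging the pointwise bound on $\rmin$ into the inversion formula for $f(\alpha + \fD)$ produces an extra term $\frac{\fD - \pi\fD}{1-\pi} = \fD$ inside the maximum over $\pi$, so it factors out to give $\widetilde f(\alpha) + \fD$.

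The main obstacle, such as it is, is bookkeeping rather than anything conceptual: one has to make sure the Legendre-type conversions are tight (suprema attained) and that the change-of-variables in $\alpha$ stays compatible with the convention that $f$ is extended by $1$ on $(-\infty, 0)$ and $0$ on $(1, \infty)$, so that values of $\alpha + \fD > 1$ or $\alpha < 0$ do not spuriously dominate the optima. These are standard properties of trade-off functions and privacy profiles, and once they are in place the entire proof reduces to four mirror-image arithmetic calculations, one for each implication.
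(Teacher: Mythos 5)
Your proposal is correct and follows essentially the same route as the paper: clause (1) from the infimum definition plus monotonicity of trade-off functions, (1)$\Leftrightarrow$(2) via the convex-conjugate duality $\delta(\eps) = 1 + f^{\ast}(-\re^{\eps})$ with the shift contributing the slack $\fD\cdot(1+\re^{\eps})$, and (1)$\Leftrightarrow$(3) via the $f$--$\rmin$ duality with the $\fD$ splitting as $\pi\fD + (1-\pi)\fD$. The only cosmetic difference is that for the reverse directions you invoke the biconjugation/inversion formulas directly, whereas the paper uses the order-reversing property of the conjugate and an explicit choice of the minimising prior; these are the same duality facts, and your edge-case handling for $\alpha + \fD > 1$ mirrors the paper's case analysis.
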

The proof relies on fundamental properties of trade-off functions, of the convex conjugate and its order-reversing property and on the lossless conversion between trade-off function and Bayes error function.

Intuitively, when $\fD$ is very small, the clauses of \cref{thm:approx-comparison} are \say{approximate} versions of the corresponding clauses of \cref{thm:blackwell}.
In particular, $\fD$ represents an upper bound on the excess vulnerability of $\widetilde \mech$ at any level $\alpha$, choice of $\eps$ or prior $\pi$.
The computation of $\deltadiv{\mech}{\widetilde \mech}$ is most naturally expressed through the Bayes error functions: 
\begin{restatable}{corollary}{deltadivbayes}\label{lemma:deltadiv-via-bayes}
$\deltadiv{\mech}{\widetilde \mech} = \max_{\pi}(\rmin(\pi)-\widetilde R_{\min}(\pi))$.
\end{restatable}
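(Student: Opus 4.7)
The plan is to derive the corollary essentially as a direct consequence of clause (3) of \cref{thm:approx-comparison}, which already establishes the equivalence $\mech \succeq_{\fD} \widetilde{\mech} \iff \forall \pi: R_{\min}(\pi) - \widetilde{R}_{\min}(\pi) \leq \fD$. The corollary then just unpacks what this equivalence says about the infimum in the definition of $\Delta$.

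More concretely, I would prove both inequalities separately. For the upper bound $\deltadiv{\mech}{\widetilde \mech} \leq \max_{\pi}(\rmin(\pi) - \widetilde R_{\min}(\pi))$, set $\kappa_0 := \max_{\pi}(\rmin(\pi) - \widetilde R_{\min}(\pi))$. Since $\rmin(\pi) - \widetilde R_{\min}(\pi) \leq \kappa_0$ holds for every $\pi$ by construction, clause (3) of \cref{thm:approx-comparison} gives $\mech \succeq_{\kappa_0} \widetilde{\mech}$, which by the definition of approximate Blackwell dominance means $\kappa_0$ lies in the set $\{\kappa \geq 0 \mid \forall \alpha:\; f(\alpha+\kappa) - \kappa \leq \widetilde f(\alpha)\}$. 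Hence its infimum $\deltadiv{\mech}{\widetilde \mech}$ is at most $\kappa_0$.

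For the reverse inequality, let $\kappa^{\ast} := \deltadiv{\mech}{\widetilde \mech}$ and take any $\kappa > \kappa^{\ast}$. By definition of the infimum, $\kappa$ belongs to the above set, i.e. $\mech \succeq_{\kappa} \widetilde{\mech}$, so clause (3) of \cref{thm:approx-comparison} yields $\rmin(\pi) - \widetilde R_{\min}(\pi) \leq \kappa$ for all $\pi$, and therefore $\max_{\pi}(\rmin(\pi) - \widetilde R_{\min}(\pi)) \leq \kappa$. Letting $\kappa \downarrow \kappa^{\ast}$ gives $\max_{\pi}(\rmin(\pi) - \widetilde R_{\min}(\pi)) \leq \kappa^{\ast} = \deltadiv{\mech}{\widetilde \mech}$. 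Combining both bounds yields equality, and since $\rmin, \widetilde R_{\min}$ are continuous on the compact interval $[0,1]$ the maximum is attained, so writing $\max_{\pi}$ rather than $\sup_{\pi}$ is justified.

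The main obstacle is essentially nonexistent once \cref{thm:approx-comparison} is in hand; the only mild subtlety is the limiting argument showing that the infimum defining $\Delta$ is actually achieved (or at least bounded by the maximum on the right). If one worried about attainment of the infimum in the definition of $\Delta$ itself, I would verify closedness of the defining set by using the continuity/right-continuity of $f$ and $\widetilde f$ together with the fact that the condition $f(\alpha+\kappa)-\kappa \leq \widetilde f(\alpha)$ is preserved under pointwise limits in $\kappa$.
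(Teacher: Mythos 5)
Your proposal is correct and follows essentially the same route as the paper: the paper likewise invokes clause (3) of \cref{thm:approx-comparison} to rewrite the defining set of the $\Delta$-divergence as $\{\kappa \geq 0 \mid \max_\pi(\rmin(\pi) - \widetilde R_{\min}(\pi)) \leq \kappa\}$ and reads off the infimum. Your version merely spells out the two inequalities (and the upward-closedness/limiting step) that the paper leaves implicit, which is a harmless refinement of the same argument.
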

The $\Delta$-divergence can be computed numerically through grid discretisation with $N$ points (i.e.\@ to tolerance $\nicefrac{1}{N}$) in $\mathcal{O}(N)$ time, and requires only oracle access to a function implementing the trade-off functions of the mechanisms.
An example is provided in \cref{sec:divergence-computation}.

Moreover, \cref{lemma:deltadiv-via-bayes} admits the following interpretation: 
\begin{quote}
$\deltadiv{\mech}{\widetilde \mech}$ \emph{expresses the worst-case regret of an analyst choosing to employ $\widetilde \mech$ instead of $\mech$, whereby regret is expressed in terms of the adversary's decrease in minimum Bayes error.}
\end{quote}
We consider this connection between Bayesian decision theory and DP the most natural interpretation of our results. 

\subsection{Metrising the Space of DP Mechanisms}\label{sec:metrising}
After introducing tools for establishing a ranking between DP mechanisms in the preceding sections, we here show that the $\Delta$-divergence can actually be used to define a \emph{metric} on the space of DP mechanisms. 
In the sequel, we will say that two mechanisms are \textit{equal} and write $\mech = \widetilde \mech$ if and only if their trade-off functions, privacy profiles and Bayes error functions are equal.
For a formal discussion on this choice of terminology, see \cref{remark:equivalence} in the Appendix.
Moreover, we define the following extension of the $\Delta$-divergence:
\begin{definition}[Symmetrised $\Delta$-divergence $\Delta^{\leftrightarrow}$]
  \begin{equation}\label{eq:delta-twosided}
    \deltatwosided{\mech}{\widetilde \mech} = \max \left \{ \deltadiv{\mech}{\widetilde \mech}, \deltadiv{\widetilde \mech}{\mech} \right \}.
  \end{equation} 
\end{definition}
Using \cref{lemma:deltadiv-via-bayes}, $\Delta^{\leftrightarrow}$ can be written as:
\begin{equation}
  \deltatwosided{\mech}{\widetilde \mech} = \lVert \rmin(\pi) - \widetilde R_{\min}(\pi) \lVert_{\infty}.
\end{equation}
In terms of the trade-off functions, the following holds:
\begin{restatable}{lemma}{levy} \label{lemma:levy}
  Let $\Delta^{\leftrightarrow} = \deltatwosided{\mech}{\widetilde \mech}$.
  Then it holds that:
  \begin{equation}\label{eq:sandwich-inequality-for-twosided}
    f(\alpha + \Delta^{\leftrightarrow}) - \Delta^{\leftrightarrow} \leq \widetilde f(\alpha) \leq f(\alpha - \Delta^{\leftrightarrow}) + \Delta^{\leftrightarrow}.
  \end{equation} 
\end{restatable}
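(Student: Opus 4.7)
The plan is to derive both inequalities directly from clause (1) of \cref{thm:approx-comparison}, applied in each direction of the symmetrised divergence. By the definition in \eqref{eq:delta-twosided}, both $\deltadiv{\mech}{\widetilde \mech}$ and $\deltadiv{\widetilde \mech}{\mech}$ are bounded above by $\Delta^{\leftrightarrow}$, so clause (1) of \cref{thm:approx-comparison} (together with the definition of approximate Blackwell dominance) should produce the two one-sided inequalities
\begin{equation*}
  f(\alpha + \Delta^{\leftrightarrow}) - \Delta^{\leftrightarrow} \leq \widetilde f(\alpha), \qquad \widetilde f(\alpha + \Delta^{\leftrightarrow}) - \Delta^{\leftrightarrow} \leq f(\alpha),
\end{equation*}
uniformly in $\alpha$. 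The first of these is already the left-hand inequality of \eqref{eq:sandwich-inequality-for-twosided}. The right-hand inequality is obtained from the second by the change of variable $\alpha \mapsto \alpha - \Delta^{\leftrightarrow}$, yielding $\widetilde f(\alpha) \leq f(\alpha - \Delta^{\leftrightarrow}) + \Delta^{\leftrightarrow}$.

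Two small points have to be treated carefully. First, the $\Delta$-divergence is defined as an infimum over $\kappa$, so we should verify that this infimum is actually attained, otherwise the bound at $\kappa = \deltadiv{\mech}{\widetilde \mech}$ needs a limiting argument. This follows from the continuity of trade-off functions on $[0,1]$ (they are continuous by the standard $f$-DP properties, and constant outside) combined with a pointwise-limit argument: given a minimising sequence $\kappa_n \downarrow \deltadiv{\mech}{\widetilde \mech}$ with $f(\alpha + \kappa_n) - \kappa_n \leq \widetilde f(\alpha)$, continuity in the first argument transfers the inequality to the limit. Second, we need to pass from the inequality at $\kappa = \deltadiv{\mech}{\widetilde \mech}$ to the same inequality at $\kappa = \Delta^{\leftrightarrow} \geq \deltadiv{\mech}{\widetilde \mech}$; this is immediate from monotonicity (trade-off functions are non-increasing), since enlarging $\kappa$ only decreases $f(\alpha + \kappa)$ and subtracts a larger constant.

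Putting the pieces together, the proof should read: apply clause (1) of \cref{thm:approx-comparison} with $\fD = \deltadiv{\mech}{\widetilde \mech}$ to get $f(\alpha + \deltadiv{\mech}{\widetilde \mech}) - \deltadiv{\mech}{\widetilde \mech} \leq \widetilde f(\alpha)$; use monotonicity of $f$ to replace $\deltadiv{\mech}{\widetilde \mech}$ by the larger $\Delta^{\leftrightarrow}$; this gives the left inequality. Repeat the argument with the roles of $\mech$ and $\widetilde \mech$ swapped to obtain $\widetilde f(\alpha + \Delta^{\leftrightarrow}) - \Delta^{\leftrightarrow} \leq f(\alpha)$, then substitute $\alpha \leftarrow \alpha - \Delta^{\leftrightarrow}$ to obtain the right inequality.

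The main obstacle, if any, is purely bookkeeping: making the attainment-of-infimum argument rigorous and handling the monotonicity step cleanly. There is no deep content beyond \cref{thm:approx-comparison} itself; \cref{lemma:levy} is essentially a symmetrisation of its clause (1).
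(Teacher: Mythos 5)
Your proposal is correct and follows essentially the same route as the paper: apply clause (1) of \cref{thm:approx-comparison} in both directions with $\fD = \deltadiv{\mech}{\widetilde \mech}$ and $\deltadiv{\widetilde \mech}{\mech}$, reparameterise $\alpha \mapsto \alpha - \Delta^{\leftrightarrow}$ for the right-hand side, and use that trade-off functions are weakly decreasing to enlarge both shifts to $\Delta^{\leftrightarrow}$. Your extra remarks on attainment of the infimum and on extending $f$ to all of $\mathbb{R}$ are careful bookkeeping that the paper handles implicitly through the statement of \cref{thm:approx-comparison}, not a different argument.
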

This substantiates that $\deltatwosided{\mech}{\widetilde \mech}=0$ is equivalent to the equality of the trade-off functions, and thus of the privacy profiles and Bayes error functions.

The similarity of \cref{eq:sandwich-inequality-for-twosided} to the Lévy distance is not coincidental, and it is shown in \cref{sec:levy-equivalence} that, by considering the trade-off function as a CDF (via $f(1-\alpha)$), $\Delta^{\leftrightarrow}$ exactly plays the role of the Lévy distance. 
Similarly to how the Lévy distance metrises the weak convergence of random variables, $\Delta^{\leftrightarrow}$ metrises the space of DP mechanisms:
\begin{restatable}{corollary}{pseudometric}
  $\Delta^{\leftrightarrow}$ is a metric.
\end{restatable}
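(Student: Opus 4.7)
The plan is to verify the three metric axioms (non-negativity/symmetry, identity of indiscernibles, and triangle inequality) by passing to the Bayes error representation of $\Delta^{\leftrightarrow}$ established immediately above the statement, namely
\begin{equation*}
  \deltatwosided{\mech}{\widetilde \mech} = \lVert \rmin - \widetilde R_{\min} \rVert_{\infty},
\end{equation*}
which reduces the problem to a standard sup-norm argument on real-valued functions on $[0,1]$.

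Non-negativity is immediate since $\lVert \cdot \rVert_{\infty} \geq 0$. Symmetry $\deltatwosided{\mech}{\widetilde \mech} = \deltatwosided{\widetilde \mech}{\mech}$ follows either from the $\max$ in the definition or from the symmetry of $\lVert \rmin - \widetilde R_{\min} \rVert_{\infty}$ in its two arguments. The triangle inequality is the triangle inequality for the sup norm: for any three mechanisms $\mech, \widetilde \mech, \mech'$,
\begin{equation*}
  \lVert \rmin - R'_{\min} \rVert_{\infty} \leq \lVert \rmin - \widetilde R_{\min} \rVert_{\infty} + \lVert \widetilde R_{\min} - R'_{\min} \rVert_{\infty}.
\end{equation*}

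For identity of indiscernibles, if $\mech = \widetilde \mech$ then by the convention of equality stated just above the statement (trade-off functions, privacy profiles and Bayes error functions all agree), we have $\rmin \equiv \widetilde R_{\min}$ and hence $\deltatwosided{\mech}{\widetilde \mech} = 0$. Conversely, if $\deltatwosided{\mech}{\widetilde \mech} = 0$, then \cref{lemma:levy} yields $f(\alpha) \leq \widetilde f(\alpha) \leq f(\alpha)$ for every $\alpha$, so $f = \widetilde f$; the privacy profiles coincide via the usual duality $\delta(\eps) = \max_{\alpha}\{1 - \alpha - e^{\eps}\alpha - f(\alpha)\}$ (or directly from clause (2) of \cref{thm:approx-comparison} with $\fD = 0$), and the Bayes error functions coincide by definition, so $\mech = \widetilde \mech$.

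The only mildly delicate point is identity of indiscernibles, and specifically the direction $\deltatwosided{\mech}{\widetilde \mech} = 0 \Rightarrow \mech = \widetilde \mech$, since it requires knowing that equality of $\rmin$ (or equivalently the sandwich from \cref{lemma:levy}) recovers equality of $f$ and of $\delta(\eps)$. This is exactly what the reconstruction formula $f(\alpha) = \max_{0 \leq \pi < 1}\bigl(-\tfrac{\pi}{1-\pi}\alpha + \tfrac{R_{\min}(\pi)}{1-\pi}\bigr)$ (stated earlier in the paper) supplies, so no new machinery is needed. All remaining verifications are routine.
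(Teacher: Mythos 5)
Your proposal is correct and takes essentially the same route as the paper: it reduces $\deltatwosided{\mech}{\widetilde \mech}$ to $\lVert \rmin - \widetilde R_{\min} \rVert_{\infty}$ via \cref{lemma:deltadiv-via-bayes} and invokes standard sup-norm properties, with your extra care on identity of indiscernibles (via \cref{lemma:levy} and the reconstruction of $f$ from $\rmin$) simply making explicit what the paper leaves implicit in its appeal to the norm axioms and the lossless conversions. One minor slip: the duality you quote should read $\delta(\eps)=\max_{\alpha}\left(1-\re^{\eps}\alpha-f(\alpha)\right)$ (no extra $-\alpha$ term), but this does not affect the argument since your alternative justification via clause (2) of \cref{thm:approx-comparison} with $\fD=0$ is valid.
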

Note that this implies that $\deltatwosided{\mech}{\widetilde \mech}>0$ unless the mechanisms have identical privacy profiles, trade-off functions or Bayes error functions, underscoring that sharing a single $\epsdel$-guarantee is an insufficient condition for stating that mechanisms provide equal protection.

\subsection{Comparisons with Extremal Mechanisms}\label{sec:extremal-comparisons}
Next, we use the $\Delta$-divergence to interpret comparisons with two \say{extremal} reference mechanisms: the \emph{blatantly non-private} (totally informative) mechanism $\mbnp$ and the \emph{perfectly private} (totally non-informative) mechanism $\mpp$.
These two mechanisms represent the \say{extremes} of the privacy/information spectrum.

For this purpose, we define for $\mbnp$: $f_{\mathrm{BNP}}(\alpha) = 0$, $R_{\min}^{\mathrm{BNP}}(\pi) = 0$, and $\delta_{\mathrm{BNP}}(\eps)=1$.
Moreover, we define for $\mpp$: $f_{\mathrm{PP}}(\alpha)=1-\alpha$, $R_{\min}^{\mathrm{PP}}(\pi) = \min\{\pi, 1-\pi\}$, and $\delta_{\mathrm{PP}}(\eps)=0$.
The next lemma establishes the \say{extremeness}:
\begin{restatable}{lemma}{extremeness}
  $\mech \succeq \mpp$ and $\mbnp \succeq \mech$ for any $\mech$.
\end{restatable}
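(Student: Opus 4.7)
The plan is to verify the trade-off function clause of \cref{thm:blackwell} for both claims, which is the most direct route. By clause (1), it suffices to show that the trade-off function $f$ of an arbitrary mechanism $\mech$ satisfies
\[
0 \leq f(\alpha) \leq 1 - \alpha \quad \text{for all } \alpha \in [0,1].
\]
The upper bound yields $\mech \succeq \mpp$ (since $f_{\mathrm{PP}}(\alpha) = 1 - \alpha$) and the lower bound yields $\mbnp \succeq \mech$ (since $f_{\mathrm{BNP}}(\alpha) = 0$).

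For the upper bound, I would invoke the standard fact that every valid trade-off function lies weakly below the diagonal $\alpha \mapsto 1-\alpha$. The quickest justification is operational: for any prescribed Type-I error level $\alpha$, the analyst (adversary) can always employ the trivial randomized test that rejects $H_0$ with probability $\alpha$ independently of the observation. This test achieves Type-I error exactly $\alpha$ and Type-II error exactly $1 - \alpha$, so the minimum Type-II error at level $\alpha$, namely $f(\alpha)$, cannot exceed $1 - \alpha$. For the lower bound, $f(\alpha) \geq 0$ simply because $f(\alpha)$ is a probability of error.

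With both inequalities established for all $\alpha \in [0,1]$, clause (1) of \cref{thm:blackwell} applies directly: $f(\alpha) \leq f_{\mathrm{PP}}(\alpha)$ gives $\mech \succeq \mpp$, and $f_{\mathrm{BNP}}(\alpha) \leq f(\alpha)$ gives $\mbnp \succeq \mech$. There is no substantive obstacle here; the statement is essentially a sanity check that $\mpp$ and $\mbnp$ are indeed the top and bottom elements of the partial order induced by $\succeq$, and both bounds on $f$ are among the basic properties of trade-off functions established in the $f$-DP framework. One could alternatively present the proof through clauses (2) or (3) of \cref{thm:blackwell}: for clause (2) note that $\delta(\eps) \in [0,1]$ always, and for clause (3) note $0 \leq \rmin(\pi) \leq \min\{\pi, 1-\pi\}$ as stated in the paragraph introducing $\rmin$, but the trade-off function route is the cleanest.
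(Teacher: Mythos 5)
Your proof is correct, but it goes through a different clause of \cref{thm:blackwell} than the paper does. You verify clause (1): every trade-off function satisfies $0 \leq f(\alpha) \leq 1-\alpha$, the upper bound via the trivial randomized test that rejects $H_0$ with probability $\alpha$ regardless of the observation (a standard and valid argument in the $f$-DP framework), and the lower bound because $f(\alpha)$ is an error probability; comparing against $f_{\mathrm{PP}}(\alpha) = 1-\alpha$ and $f_{\mathrm{BNP}}(\alpha) = 0$ then gives both dominance relations. The paper instead verifies clause (3): it uses the already-stated properties $\rmin(\pi) \leq \min\{\pi, 1-\pi\} = R_{\min}^{\mathrm{PP}}(\pi)$ and $\rmin(\pi) \geq 0 = R_{\min}^{\mathrm{BNP}}(\pi)$ and invokes the same theorem. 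Since \cref{thm:blackwell} makes the three clauses equivalent, the two routes are interchangeable and equally short; the paper's choice showcases its newly introduced Bayes-error representation (consistent with how the extremal mechanisms are defined there primarily through $R_{\min}$), while yours rests on the classical hypothesis-testing bound on trade-off functions, which you justify from first principles rather than citing it. You even note the clause-(3) alternative explicitly, so nothing is missing.
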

We can thus compute a \say{divergence from perfect privacy} $\deltadiv{\mpp}{\mech}$, and a \say{divergence to blatant non-privacy} $\deltadiv{\mech}{\mbnp}$.
Both have familiar operational interpretations in terms of quantities from the field of DP:
\begin{restatable}{lemma}{perfectprivacy}
  It holds that $\deltadiv{\mpp}{\mech} = \nicefrac{1}{2}\dtv(P,Q) =  \nicefrac{1}{2}\mathsf{Adv} = \nicefrac{1}{2} \delta(0)$.
\end{restatable}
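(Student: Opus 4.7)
The cleanest route is via clause (1) of \cref{thm:approx-comparison} applied with the dominating mechanism fixed to $\mpp$, whose trade-off function is $f_{\mathrm{PP}}(\alpha) = 1 - \alpha$. Unfolding the definition directly gives
\begin{equation*}
\deltadiv{\mpp}{\mech} = \inf\bigl\{\kappa \geq 0 \;\bigm|\; 1 - (\alpha + \kappa) - \kappa \leq f(\alpha) \text{ for all } \alpha\in[0,1]\bigr\}.
\end{equation*}
Rearranging, the constraint becomes $2\kappa \geq 1 - \alpha - f(\alpha)$ for every $\alpha$, so the infimum equals $\tfrac{1}{2}\sup_{\alpha\in[0,1]} (1 - \alpha - f(\alpha))$.

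The remaining step is to identify this supremum with each of the three quantities on the right-hand side. By the notation/background section, $\sup_\alpha(1-\alpha - f(\alpha)) = \dtv(P,Q) = \mathsf{Adv}$. The identity $\delta(0) = \dtv(P,Q)$ is the classical endpoint relation between the privacy profile and the trade-off function (via the conversion $\delta(\eps) = \sup_\alpha (1 - \re^{\eps}\alpha - f(\alpha))$ at $\eps=0$; see \citet{balle2020privacy,dong2022gaussian}), which is invoked in the statement. Chaining these three equalities with the computation above yields the claim.

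As a consistency check, the same answer follows from \cref{lemma:deltadiv-via-bayes}: $\deltadiv{\mpp}{\mech} = \max_\pi(\min\{\pi, 1-\pi\} - \rmin(\pi))$. Since $f$ is assumed symmetric, $\rmin$ is symmetric around $\pi = \tfrac{1}{2}$, and both $\min\{\pi,1-\pi\}$ and $\rmin$ vanish at the endpoints, so the maximum of their difference is attained at $\pi = \tfrac{1}{2}$, giving $\tfrac{1}{2} - \rast = \tfrac{1}{2}(1 - \min_\alpha(\alpha + f(\alpha))) = \tfrac{1}{2}\sup_\alpha(1 - \alpha - f(\alpha))$, which matches.

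The only subtle point to handle carefully is ensuring the supremum is attained (or at least finite) so that the infimum over $\kappa$ equals $\tfrac{1}{2}$ times it; this is immediate from the lower semi-continuity of trade-off functions on the compact interval $[0,1]$. Beyond that, the argument is essentially a one-line substitution into \cref{thm:approx-comparison}, so I do not anticipate any real obstacle.
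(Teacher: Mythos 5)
Your proof is correct, but your primary argument takes a different route from the paper's. You compute $\deltadiv{\mpp}{\mech}$ directly from the infimum definition, substituting $f_{\mathrm{PP}}(\alpha)=1-\alpha$ and rearranging the shift constraint to $2\kappa \geq 1-\alpha-f(\alpha)$, so that the divergence is immediately $\tfrac{1}{2}\sup_{\alpha}(1-\alpha-f(\alpha))$, and you then invoke the background identities $\sup_\alpha(1-\alpha-f(\alpha))=\dtv(P,Q)=\mathsf{Adv}=\delta(0)$. The paper instead goes through \cref{lemma:deltadiv-via-bayes}: it writes $\deltadiv{\mpp}{\mech}=\max_\pi(\min\{\pi,1-\pi\}-\rmin(\pi))$, argues the maximum is attained at $\pi=\nicefrac{1}{2}$ (using concavity and symmetry of $\rmin$), and then expands $\rmin(\nicefrac{1}{2})$ to reach the same quantity — exactly the argument you include as your ``consistency check.'' Your direct route is arguably cleaner: it sidesteps the question of where the maximum over $\pi$ sits (which the paper justifies only loosely, since a difference of two functions each maximized at $\nicefrac{1}{2}$ need not be maximized there in general, although concavity does make it work here), and it needs nothing beyond the definition and the stated trade-off/TV/profile identities. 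What the paper's route buys is the intermediate identity $\deltadiv{\mpp}{\mech}=\nicefrac{1}{2}-\rmin(\nicefrac{1}{2})$, which is reused verbatim in the proof of \cref{lemma:complementary-errors}. One small point you gloss over: with the trade-off function extended by $f_{\mathrm{PP}}(x)=0$ for $x>1$, the constraint for $\alpha+\kappa>1$ reads $-\kappa\leq f(\alpha)$ rather than $1-\alpha-2\kappa\leq f(\alpha)$; since both forms are vacuously satisfied in that regime (there $1-\alpha-f(\alpha)<\kappa\leq 2\kappa$), the feasible sets coincide and your conclusion stands, but the remark is worth a sentence. Note also that the correct sign of the identity you invoke is the one in the extended background ($\mathsf{Adv}=1-\min_\alpha(\alpha+f(\alpha))$); the $\min$ written in the main-text notation paragraph is a typo, and your use of $\sup$ is the right one.
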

This conforms to the intuition that, the \say{further} the mechanism is from perfect privacy, the higher the adversary's MIA advantage can be.
\begin{restatable}{lemma}{blatantnonprivacy}\label{lemma:blatant-non-privacy}
  It holds that $\deltadiv{\mech}{\mbnp} = \rast = \alpha^{\ast}$, where $\rast$ is the minimax Bayes error and $\alpha^{\ast}$ the fixed point of the trade-off function of $\mech$.
\end{restatable}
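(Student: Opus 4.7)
The plan is to chain two reductions: first, reduce the $\Delta$-divergence to the extremal mechanism $\mbnp$ to a Bayes-error quantity; second, reduce that quantity to the fixed point of the trade-off function via symmetry.

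For the first equality, I would invoke \cref{lemma:deltadiv-via-bayes} directly. Since $R_{\min}^{\mathrm{BNP}}(\pi) = 0$ identically on $[0,1]$, we obtain
\begin{equation*}
  \deltadiv{\mech}{\mbnp} = \max_{\pi \in [0,1]} \bigl(\rmin(\pi) - R_{\min}^{\mathrm{BNP}}(\pi)\bigr) = \max_{\pi \in [0,1]} \rmin(\pi) = \rast,
\end{equation*}
using the definition of $\rast$ as the minimax Bayes error. This part is essentially immediate from the bookkeeping done in earlier lemmas.

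For the second equality $\rast = \alpha^{\ast}$, I would use the fact (already noted in the paper after the definition of $\rast$) that $\rast$ is attained at $\pi = \tfrac{1}{2}$ because $f$ is symmetric. Thus
\begin{equation*}
  \rast = \rmin\!\left(\tfrac{1}{2}\right) = \tfrac{1}{2}\min_{\alpha \in [0,1]} \bigl(\alpha + f(\alpha)\bigr).
\end{equation*}
It then suffices to show that the minimum of $\alpha \mapsto \alpha + f(\alpha)$ on $[0,1]$ equals $2\alpha^{\ast}$, i.e., that this minimum is attained at the fixed point $\alpha^{\ast}$ of $f$.

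The geometric argument I would present is the following: minimising $\alpha + f(\alpha)$ on the graph of $f$ amounts to finding the smallest constant $c$ such that the line $\alpha + \beta = c$ touches the graph of $f$ from below. Because $f$ is symmetric (its graph is invariant under reflection across the diagonal $\alpha = \beta$) and convex with $f(0) = 1$, $f(1) = 0$, this reflection fixes any line of slope $-1$; hence the set of minimisers of $\alpha + f(\alpha)$ is itself reflection-invariant, and a convex combination argument together with convexity of $f$ places a minimiser on the diagonal, i.e., at a fixed point of $f$. Substituting $\alpha = \alpha^{\ast}$ gives $\alpha^{\ast} + f(\alpha^{\ast}) = 2\alpha^{\ast}$, so $\rast = \alpha^{\ast}$ as required.

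The main obstacle is the last step: justifying rigorously that the minimum of $\alpha + f(\alpha)$ is attained at the fixed point of $f$ without assuming differentiability of $f$. I would handle this via the subdifferential characterisation — the optimality condition is $-1 \in \partial f(\alpha)$, and symmetry of $f$ forces $-1$ to lie in $\partial f$ precisely on the fixed-point set (where left and right slopes are reciprocals of each other by the reflection symmetry, so their product is $1$, forcing both to equal $-1$ when they straddle it). Everything else in the proof is routine manipulation of definitions.
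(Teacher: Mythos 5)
Your proposal is correct and follows essentially the same route as the paper: the first equality is obtained from \cref{lemma:deltadiv-via-bayes} with $R_{\min}^{\mathrm{BNP}}\equiv 0$, and the second from symmetry of $f$ placing the minimiser of $\alpha+f(\alpha)$ at the fixed point, which is exactly the paper's argument. The only difference is cosmetic: the paper sets $\frac{\rd}{\rd\alpha}f(\alpha)=-1$ assuming differentiability, whereas your subdifferential/convex-combination treatment handles non-smooth trade-off functions, a mild but welcome refinement.
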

Recall that $\rast$ is the error rate of an \say{uninformed} adversary ($\pi=0.5$, compare \cref{fig:bayes-from-tradeoff}), whereas $\alpha^{\ast}$ is the point on the trade-off curve closest to the origin, i.e.\@ to $(\alpha,f(\alpha))=(0,0)$ (see \cref{fig:tradeoff-from-bayes}).
When either point coincides with the origin, the mechanism is blatantly non-private.
Moreover, the following holds for any mechanism:
\begin{restatable}{lemma}{complementaryerrors}\label{lemma:complementary-errors} 
  $\deltadiv{\mpp}{\mech}+ \deltadiv{\mech}{\mbnp} = 0.5$.
\end{restatable}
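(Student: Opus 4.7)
The plan is to combine the two immediately preceding lemmas with a single algebraic identity linking $\dtv(P,Q)$ to the minimax Bayes error $\rast$. I would first invoke Lemma~\ref{lemma:blatant-non-privacy} to write $\deltadiv{\mech}{\mbnp} = \rast$, and the lemma preceding it to write $\deltadiv{\mpp}{\mech} = \tfrac{1}{2}\dtv(P,Q)$. This reduces the claim to showing
\begin{equation*}
  \tfrac{1}{2}\dtv(P,Q) + \rast = \tfrac{1}{2}.
\end{equation*}

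Next, I would express both summands through the common quantity $\min_{\alpha}\bigl(\alpha + f(\alpha)\bigr)$. Since the trade-off function $f$ is assumed symmetric, $\rmin$ attains its maximum at $\pi = \tfrac{1}{2}$, so directly from the definition of the Bayes error,
\begin{equation*}
  \rast = \rmin(\tfrac{1}{2}) = \tfrac{1}{2}\min_{\alpha}\bigl(\alpha + f(\alpha)\bigr).
\end{equation*}
For the total variation, the hypothesis-testing characterisation gives $\dtv(P,Q) = \mathsf{Adv} = \max_{\alpha}\bigl(1 - \alpha - f(\alpha)\bigr)$, which rearranges to
\begin{equation*}
  \dtv(P,Q) = 1 - \min_{\alpha}\bigl(\alpha + f(\alpha)\bigr).
\end{equation*}

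Substituting both expressions into the target identity, the $\min_{\alpha}(\alpha + f(\alpha))$ terms cancel and the right-hand side collapses to $\tfrac{1}{2}$, completing the proof. There is no real obstacle here: the entire content of the lemma is the observation that $\rast$ and $\tfrac{1}{2}\dtv(P,Q)$ are two affine functionals of the same scalar $\min_{\alpha}(\alpha + f(\alpha))$ that sum to a constant. The only point requiring mild care is invoking the symmetry of $f$ to justify that the maximum of $\rmin$ is realised at $\pi = \tfrac{1}{2}$, which is already recorded in the discussion of $\rast$ earlier in the excerpt.
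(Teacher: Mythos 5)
Your proposal is correct and takes essentially the same route as the paper: both combine the two preceding lemmas and observe that the two divergences cancel, the paper doing so directly at the level of $\rmin(\nicefrac{1}{2})$ while you equivalently unpack both quantities through $\min_{\alpha}\bigl(\alpha+f(\alpha)\bigr)$.
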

The results of \cref{sec:metrising} and \cref{sec:extremal-comparisons} lead to the following conclusions:
On one hand, the metric $\Delta^{\leftrightarrow}$ can be used to measure a notion of \say{informational distance} even between \textit{completely different} mechanisms (e.g.\@ Randomised Response and DP-SGD). 
Additionally, the space of DP mechanisms is a \textit{bounded partially ordered set} with a maximal ($\mbnp$) and a minimal ($\mpp$) bound, and \textit{any} DP mechanism can be placed on the information spectrum between them. 

While not discussed in detail here, we note that this set is also a \textit{lattice} \citep[Theorem 10]{blackwell1953equivalent}.
Intuitively, the significance of this lattice structure is that it guarantees a globally coherent ordering of privacy guarantees. 
For any two DP mechanisms $\mathcal{M}_1$ and $\mathcal{M}_2$, it ensures the existence of a unique \textit{greatest lower bound} (meet, $\mech_1 \wedge \mech_2$) representing the least private mechanism that is still more private than both, and a unique \textit{least upper bound} (join, $\mech_1 \vee \mech_2$) representing the most private mechanism that is still less private than both. 
Note that $\forall \mech: \mathcal{M} \wedge \mpp = \mpp$ and $\forall \mech: \mathcal{M} \vee \mbnp = \mbnp$ as a consequence of $\mpp$'s and $\mbnp$'s extremal properties.

\section{Emergent Blackwell Dominance} \label{sec:emergent-blackwell}
We next study the interplay of mechanism comparisons and composition. 
The fact that $\mech \succeq \widetilde \mech$ implies $\mech^{\otimes N} \succeq \widetilde\mech^{\otimes N}$ is known \citep{torgersen1991comparison}. 
So far however, the questions of (1) whether mechanisms which are initially \emph{not} Blackwell ranked will eventually \emph{become} Blackwell ranked and (2) which of their properties determine the resulting ranking have not been directly investigated. 

The next result follows from the fact that --under specific preconditions-- composition qualitatively transforms mechanisms towards Gaussians mechanisms (GMs) due to a central limit theorem (CLT)-like phenomenon \citep{dong2022gaussian, sommer2018privacy}. 
Since GMs are always Blackwell ranked (see \cref{lemma:gaussian-trade-off-ordering} in the Appendix), we expect Blackwell dominance to emerge once mechanisms are sufficiently well-approximated by GMs.
We first define:
\begin{equation}
  \eta = \frac{v_1}{\sqrt{v_2 - v_1^2}},
\end{equation}
which plays an important role in the analysis below.
Moreover, in the sequel, $v_1, v_2, v_3$ and $v_4$ will denote the following functionals of $f$:
\begin{align}
    v_1 &= - \int_0^1 \log \left |\frac{\rd}{\rd x}f(x)\right |\rd x, \\
    v_2 &= \int_0^1 \log^2 \left | \frac{\rd}{\rd x}f(x) \right|\rd x, \\
    v_3 &= \int_0^1 \left\lvert\log \left \rvert\frac{\rd}{\rd x}f(x)\right\rvert + v_1 \right\rvert^3 \rd x, \textand\\ 
    v_4 &= \int_0^1 \left |\log \left|\frac{\rd}{\rd x}f(x) \right | \right|^3\rd x.
\end{align}
Intuitively, these represent moments of the PLRV.
\begin{restatable}{lemma}{rankingresolution}\label{thm:ranking-resolution}
  Let $\{\mech_{Ni}:1 \leq i \leq N \}_{N=1}^\infty$ be a triangular array of mechanisms satisfying the following conditions:
  \begin{enumerate}
      \item $\lim\limits_{N\to\infty} \sum_{i=1}^N v_1(f_{Ni}) = K$;
      \item $\lim\limits_{N\to\infty} \max_{1\leq i\leq N} v_1(f_{Ni}) = 0$; 
      \item $\lim\limits_{N\to\infty} \sum_{i=1}^N v_2(f_{Ni}) = s^2$; 
      \item $\lim\limits_{N\to\infty} \sum_{i=1}^N v_4(f_{Ni}) = 0$. 
  \end{enumerate}
  Analogously, define $\{\widetilde\mech_{Ni}: 1 \leq i \leq N \}_{N=1}^\infty$ for constants $\widetilde K, \widetilde s$.
  Then, if $\nicefrac{K}{s}>\nicefrac{\widetilde K}{\widetilde s}$, there exists $N^*$ such that, for all $N\geq N^*$:
  \begin{align}
    \mech_{N1}\otimes \dots \otimes \mech_{NN}\succeq \widetilde\mech_{N1}\otimes \dots \otimes \widetilde\mech_{NN},
  \end{align}
  where $\mech_{N1}\otimes \dots \otimes \mech_{NN}$ denotes $N$-fold mechanism composition and analogously for $\widetilde \mech_{Ni}$.
\end{restatable}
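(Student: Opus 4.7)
The plan is to recognize conditions (1)--(4) as the hypotheses of a central-limit theorem (CLT) for composition of trade-off functions, conclude that both $N$-fold compositions converge to Gaussian mechanisms with $\eta$-parameters $K/s$ and $\widetilde K/\widetilde s$ respectively, and then lift the strict ordering of these Gaussian limits back to the finite-$N$ compositions.

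First, I would verify that the four conditions match exactly the hypotheses of the $f$-DP CLT of \citet{dong2022gaussian} for triangular arrays: condition (1) controls the mean of the summed PLRVs, (3) the variance, (2) is the Lindeberg/negligibility condition (and additionally implies $\sum_i v_1(f_{Ni})^2 \leq \max_i v_1(f_{Ni}) \cdot \sum_i v_1(f_{Ni}) \to 0$, so $\sum_i (v_2 - v_1^2) \to s^2$), and (4) is Lyapunov's third-moment condition. Applying this CLT yields uniform convergence of both composed trade-off functions to Gaussian trade-off functions:
\begin{equation*}
    \sup_{\alpha \in [0,1]} \bigl| f_{\mech_{N1}\otimes\dots\otimes\mech_{NN}}(\alpha) - G_\mu(\alpha)\bigr| \to 0,
\end{equation*}
where $G_\mu$ is the Gaussian trade-off function whose parameter $\mu$ corresponds to the limiting $\eta = K/s$, and analogously $\widetilde f^{\otimes N} \to G_{\widetilde\mu}$ with $\widetilde\mu$ corresponding to $\widetilde\eta = \widetilde K/\widetilde s$.

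Since Gaussian mechanisms are totally Blackwell-ordered by their $\mu$ (a strictly increasing function of $\eta$; see \cref{lemma:gaussian-trade-off-ordering} in the appendix), the assumption $K/s > \widetilde K/\widetilde s$ gives $G_\mu \succeq G_{\widetilde\mu}$, with strictly positive gap $G_{\widetilde\mu}(\alpha) - G_\mu(\alpha) > 0$ for $\alpha \in (0,1)$ and equality at $\alpha \in \{0,1\}$. Combining with the Berry--Esseen-type quantitative form of the CLT, which sandwiches $f^{\otimes N}$ between Gaussian trade-off functions with parameters $\mu_N^\pm \to \mu$ and similarly $\widetilde\mu_N^\pm \to \widetilde\mu$ for the second array, the gap $\mu > \widetilde\mu$ forces $\mu_N^- > \widetilde\mu_N^+$ for $N$ large, yielding
\begin{equation*}
    f^{\otimes N}(\alpha) \leq G_{\mu_N^-}(\alpha) \leq G_{\widetilde\mu_N^+}(\alpha) \leq \widetilde f^{\otimes N}(\alpha)
\end{equation*}
pointwise. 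Clause~(1) of \cref{thm:blackwell} then delivers the desired Blackwell dominance for all $N \geq N^*$.

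The main technical obstacle is the boundary behaviour: the gap $G_{\widetilde\mu}(\alpha) - G_\mu(\alpha)$ vanishes at $\alpha \in \{0,1\}$, so if the Berry--Esseen bound carries additive vertical slack (rather than a pure parameter-shift sandwich), that slack does not vanish faster than the gap near the boundary. I would address this either by working with the pure-shift form of the Berry--Esseen bound (in which $f^{\otimes N}$ is sandwiched by $G_{\mu_N^\pm}$ without additive slack), or by passing to the equivalent privacy-profile or Bayes-error characterisations of clauses~(2) and~(3) of \cref{thm:blackwell}, where the tail behaviour of $\delta(\eps)$ and the boundary behaviour of $\rmin(\pi)$ admit cleaner handling via the monotonicity of the former and the concavity of the latter.
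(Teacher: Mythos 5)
Your proposal follows essentially the same route as the paper's proof: invoke the $f$-DP CLT of \citet{dong2022gaussian} (Theorem~6 there) to get uniform convergence of both composed trade-off functions to Gaussian trade-off functions, order those Gaussians via \cref{lemma:gaussian-trade-off-ordering} using $K/s > \widetilde K/\widetilde s$, and transfer the ordering to finite $N$ via clause~(1) of \cref{thm:blackwell}. The boundary degeneracy you flag (the Gaussian gap vanishing at $\alpha\in\{0,1\}$) is a real subtlety, but the paper's own proof does not address it either --- it concludes directly from uniform convergence plus strict ordering of the limits --- so your attempt is, if anything, more careful; note only that the ``pure-shift'' Berry--Esséen sandwich $G_{\mu_N^-}\leq f^{\otimes N}\leq G_{\mu_N^+}$ you would lean on is not what \citet{dong2022gaussian} provide (their Theorem~5 carries the additive $\gamma$-slack used in the proof of \cref{thm:delta-div-composition-bound}), so that particular patch is not available off the shelf.
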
 
Our proof strategy relies on first showing that, under the stated preconditions, mechanisms asymptotically converge to Gaussian mechanisms under composition and combining this fact with the property that Gaussian mechanisms are always either equal, or one Blackwell dominates the other.

The conditions above are also used in \citet{dong2022gaussian} to prove the CLT-like convergence of the trade-off functions of composed mechanisms to that of a GM, which we adapt here to show conditions for the emergence of Blackwell dominance between compositions of mechanisms in the limit.
Concretely, $\lbrace \mech_{Ni} \rbrace_{i=1}^{N}$ is a collection of mechanisms calibrated to provide a certain level of privacy after composition, and the mechanisms in the sequence change (become progressively more private) as $N$ grows to $\infty$ to maintain that level of privacy as more mechanisms are composed. 
However, from the more practical standpoint of comparing instances of DP-SGD with different parameters, we are rather interested in the question of approximate Blackwell dominance after a \emph{finite} number of self-compositions of \emph{fixed} parameter mechanisms.
This is shown next.
\begin{restatable}{theorem}{deltadivcompositionbound}\label{thm:delta-div-composition-bound}
Let $\mech,\widetilde\mech$ be two mechanisms with $v_4,\widetilde v_4<\infty$ and denote by $\mech^{\otimes N}, \widetilde\mech^{\otimes \widetilde N}$ their $N$- and $\widetilde N$-fold self-compositions. 
Then, $\nicefrac{N}{\widetilde N} \geq \nicefrac{\widetilde\eta^2}{\eta^2}$ implies:
  \begin{align}
  \deltadiv{\mech^{\otimes N}}{\widetilde\mech^{\otimes \widetilde N}} \leq 0.56\left(
    \frac{\eta^3v_3}{\sqrt{N}v_1^3} + \frac{\widetilde \eta^3\widetilde v_3}{\sqrt{\widetilde N}\widetilde v_1^3}\right) 
  \end{align}
  In particular, if $N=\widetilde N$, $\eta \geq \widetilde \eta$ implies:
  \begin{align}
  \deltadiv{\mech^{\otimes N}}{\widetilde\mech^{\otimes \widetilde N}} \leq \frac{0.56}{\sqrt{N}}\left(
    \frac{\eta^3v_3}{v_1^3} + \frac{\widetilde \eta^3\widetilde v_3}{\widetilde v_1^3}\right). 
  \end{align}
\end{restatable}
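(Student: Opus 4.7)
The overall strategy is to sandwich each composed mechanism between its Gaussian CLT approximant and apply a triangle inequality for $\Delta$, reducing the problem to ordering two Gaussian mechanisms. First I would verify that $\Delta$ satisfies a triangle inequality: if $\deltadiv{\mech_1}{\mech_2}\le\kappa_1$ and $\deltadiv{\mech_2}{\mech_3}\le\kappa_2$, substituting $\alpha+\kappa_2$ for $\alpha$ in the first estimate and then chaining with the second yields $f_1(\alpha+\kappa_1+\kappa_2)-(\kappa_1+\kappa_2)\le f_3(\alpha)$, hence $\deltadiv{\mech_1}{\mech_3}\le\kappa_1+\kappa_2$. I would also record the elementary bound $\deltadiv{\mech}{\widetilde\mech}\le\sup_\alpha\bigl(f(\alpha)-\widetilde f(\alpha)\bigr)_+$, obtained by taking $\kappa$ equal to that supremum and using the monotonicity of trade-off functions.

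The core technical step is a quantitative CLT for $f$-DP compositions. Via the identification $\alpha\mapsto 1-f(1-\alpha)$, the trade-off function $f^{\otimes N}$ encodes the CDF of the $N$-fold sum of i.i.d.\ privacy loss random variables, whose mean, variance, and third absolute central moment are precisely $v_1$, $v_2-v_1^2$, and $v_3$. The i.i.d.\ Berry--Esseen inequality with the sharp constant then gives
\[
\bigl\|f^{\otimes N}-G_{\sqrt{N}\,\eta}\bigr\|_\infty \;\le\;\frac{0.56\,v_3}{(v_2-v_1^2)^{3/2}\sqrt{N}}\;=\;\frac{0.56\,\eta^3\,v_3}{v_1^3\sqrt{N}},
\]
the last identity being $(v_2-v_1^2)^{-3/2}=\eta^3/v_1^3$. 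The analogous bound holds for $\widetilde\mech^{\otimes\widetilde N}$ against $G_{\sqrt{\widetilde N}\,\widetilde\eta}$, and the assumption $v_4,\widetilde v_4<\infty$ guarantees via Lyapunov's inequality that $v_3,\widetilde v_3$ are finite, so the bound is meaningful.

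The final ingredient is the universal Blackwell ordering of Gaussian mechanisms (\cref{lemma:gaussian-trade-off-ordering}), which gives $G_{\mu_1}\succeq G_{\mu_2}$ iff $\mu_1\ge\mu_2$ and therefore $\deltadiv{G_{\mu_1}}{G_{\mu_2}}=0$ in that case. The hypothesis $N/\widetilde N\ge\widetilde\eta^2/\eta^2$ is exactly $\sqrt{N}\,\eta\ge\sqrt{\widetilde N}\,\widetilde\eta$, so the middle term vanishes and the triangle inequality yields
\[
\deltadiv{\mech^{\otimes N}}{\widetilde\mech^{\otimes \widetilde N}}\;\le\;\frac{0.56\,\eta^3 v_3}{v_1^3\sqrt{N}}+\frac{0.56\,\widetilde\eta^3\widetilde v_3}{\widetilde v_1^3\sqrt{\widetilde N}}.
\]
The case $N=\widetilde N$ with $\eta\ge\widetilde\eta$ is an immediate specialisation.

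The main obstacle will be executing the quantitative CLT step cleanly: \citet{dong2022gaussian} establish qualitative convergence under composition, but extracting the explicit constant $0.56$ requires invoking the best currently-known i.i.d.\ Berry--Esseen bound and carefully checking that the sup-norm on trade-off functions coincides with the Kolmogorov distance on the associated CDFs via the $\alpha\mapsto 1-f(1-\alpha)$ correspondence. All remaining pieces---the triangle inequality for $\Delta$, the sup-norm bound, and the Gaussian ordering---are either routine or already supplied by results cited in the paper.
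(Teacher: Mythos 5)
Your overall architecture is exactly the paper's: sandwich each composition between a Gaussian approximant, use the Blackwell ordering of Gaussian trade-off functions (so the middle $\Delta$-term vanishes under $\sqrt{N}\eta \geq \sqrt{\widetilde N}\widetilde\eta$), and chain with a triangle inequality for $\Delta$ (your shift-chaining argument for the triangle inequality is correct; the paper proves it via Bayes error functions, which is equivalent).

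The genuine gap is in your central quantitative step. First, your Gaussian approximant is off by a factor of two: the composition $f^{\otimes N}$ is approximated by $G_{2\sqrt{N}\eta}$, not $G_{\sqrt{N}\eta}$ (the CLT for trade-off functions converges to $G_{2K/s}$; matching the mean $\approx N v_1$ and variance $\approx N(v_2-v_1^2)$ of the summed PLRV to those of the Gaussian PLRV, which are $\mu^2/2$ and $\mu^2$, forces $\mu = 2\sqrt{N}v_1/\sqrt{v_2-v_1^2}$). The factor cancels in the ordering condition and in the final bound, but the approximation inequality you state is false as written. Second, the identification $\alpha \mapsto 1-f(1-\alpha)$ does \emph{not} make $f^{\otimes N}$ the CDF of the sum of PLRVs; the trade-off function is a composition of two different CDFs, $f^{\otimes N}(\alpha) = F_{Y_N}\bigl(F_{X_N}^{-1}(1-\alpha)\bigr)$, so a single i.i.d.\ Berry--Esseen bound does not directly yield a sup-norm bound $\lVert f^{\otimes N} - G_{\mu}\rVert_\infty \leq \gamma$; one must control both CDF approximations and combine them, and what actually comes out is the L\'evy-type sandwich $G_\mu(\alpha+\gamma)-\gamma \leq f^{\otimes N}(\alpha) \leq G_\mu(\alpha-\gamma)+\gamma$ with $\gamma = 0.56\,v_3/\bigl(\sqrt{N}(v_2-v_1^2)^{3/2}\bigr)$. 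This is precisely Theorem~5 of \citet{dong2022gaussian}, which the paper invokes directly and which is exactly the form needed to bound $\deltadiv{\mech^{\otimes N}}{\mech_G}$ and $\deltadiv{\widetilde\mech_G}{\widetilde\mech^{\otimes \widetilde N}}$ by $\gamma$ and $\widetilde\gamma$. You flag this step as "the main obstacle" but do not carry it out; citing that theorem (with the corrected parameter $2\sqrt{N}\eta$) closes the gap, after which your triangle-inequality assembly and the identity $\eta^3/v_1^3 = (v_2-v_1^2)^{-3/2}$ give the stated bound, and the $N=\widetilde N$ case follows as you say.
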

The proof relies on the aforementioned Blackwell dominance properties between Gaussian mechanisms combined with the triangle inequality property of the $\Delta$-divergence and a judicious application of the Berry-Esséen-Theorem.

\cref{thm:delta-div-composition-bound} intuitively states that the $\Delta$-divergence will approach zero not asymptotically as in \cref{thm:ranking-resolution}, but within a specific number of update steps and allows for choosing $N, \widetilde N$ differently. 
Seeing as the number of update steps is a crucial hyper-parameter in DP-SGD \citep{de2022unlocking}, this is required for practical usefulness. 
In addition, it pinpoints the exact relationship between the mechanisms ($\nicefrac{\widetilde\eta^2}{\eta^2}$) that determines which mechanism will eventually dominate.
In particular, if $\nicefrac{N}{\widetilde N} \geq \nicefrac{\widetilde\eta^2}{\eta^2}$, then $\deltadiv{\mech^{\otimes N}}{\widetilde\mech^{\otimes \widetilde N}}$ will vanish at least as fast as $\min\{N,\widetilde N\}^{-\nicefrac{1}{2}}$, and if $N=\widetilde N$, then the emergence of Blackwell dominance depends only on the parameters $\eta, \widetilde \eta$, i.e.\@ on the PLRV moments.
Moreover, this result does \emph{not} require scaling the mechanism parameters at every step to prevent them from becoming blatantly non-private, even at very large numbers of compositions.

\section{Experiments}
\mypar{Approximate Comparisons in Practice}
\cref{fig:comparison-demo} demonstrates a \say{canonical} example of an approximate comparison between the GM ($\sigma=1$) and the Laplace mechanism ($b=1$) on a function with unit global sensitivity.
\begin{figure}[ht]
  \centering
  \includegraphics[width=0.9\columnwidth]{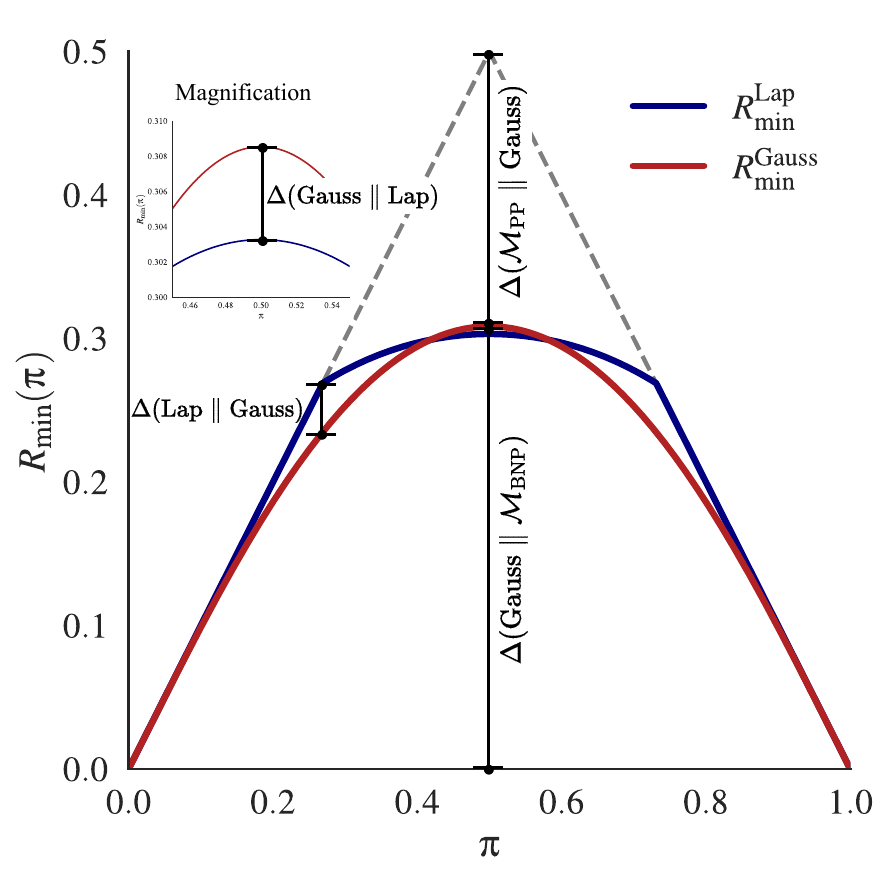}
  \caption{
  Approximate comparison between the Laplace ($b=1$) and the Gaussian ($\sigma=1$) mechanism via their Bayes error representations.
  Note that all divergence values of interest are visually depicted in this representation.
  }
  \label{fig:comparison-demo}
\end{figure}
Observe that the Bayes error functions cross at $\pi\approx0.4$, and that $\deltadiv{\mathrm{Gauss}}{\mathrm{Lap}}=0.005 < \deltadiv{\mathrm{Lap}}{\mathrm{Gauss}}=0.034$, as seen by the length of the black \say{rulers} in the figure. 
Therefore, the worst-case regret in terms of privacy vulnerability of choosing the Laplace mechanism is smaller than for the GM.
Moreover, the Gaussian mechanism offers only marginally stronger protection for a narrow range of $\pi$ around $\nicefrac{1}{2}$ corresponding to the prior of an \say{uninformed} adversary.
This allows for much more granular insights into the mechanisms' privacy properties beyond the folklore statement that \textit{pure DP mechanisms (Laplace) offer \say{stronger privacy} than approximate DP mechanisms (GM)}.

\mypar{Tightness of the Bound in \cref{thm:delta-div-composition-bound}}
To evaluate the bound, we compare two SGMs $\mech, \widetilde \mech$ with $\sigma = 2, \widetilde \sigma = 3$, $p = \widetilde p = 9\cdot 10^{-4}$, $N = 1.4 \cdot 10^{6}$ and $\widetilde N=3.4\cdot 10^{6}$. 
The predicted bound is $\deltadiv{\mech^{\otimes N} }{\widetilde \mech^{\otimes \widetilde N}}< 10^{-3}$, while the empirically computed bound is $8\cdot 10^{-4}$.
The parameter choices in this example mirror those used in \citet{de2022unlocking} for fine-tuning on the JFT-300M dataset to $(8, 5\cdot10^{-7})$-DP, underscoring the applicability of our bound to large-scale ML workflows.

\mypar{Bayesian Mechanism Selection}
An additional benefit of our Bayes error interpretation is that it facilitates principled reasoning about the adversary's auxiliary information.
Recall that $\pi$ expresses the adversary's \say{informedness}, i.e.\@ the strength of their prior belief about the challenge example's membership.
This allows for introducing hierarchical Bayesian modelling techniques to mechanism comparisons by introducing \emph{hyper-priors}, i.e.\@ probability distributions over the adversary's values of $\pi$.
For example, if the defender is very uncertain about the anticipated adversary's prior, they can use an \say{uninformative} hyper-prior such as the Jeffreys prior \cite{jeffreys1946invariant} (here: $\mathrm{Beta}(0.5,0.5)$). 
Alternatively, a more \say{informed adversary} with stronger prior beliefs (i.e.\@ low or high values or $\pi$) could be modelled by e.g.\@ the $\mathrm{UQuadratic}[0,1]$ distribution.
Then, denoting by $\Psi(\pi)$ the hyper-prior, one can obtain the \emph{weighted minimum Bayes error} $\rmin^{\Psi}(\pi) = \rmin(\pi) \Psi(\pi)$. 
Similarly, a weighted $\Delta$-divergence $\Delta^{\Psi}(\mech \parallel \widetilde \mech) = \max_{\pi} ( \rmin^{\Psi}(\pi) - \widetilde R_{\min}^{\Psi}(\pi))$ can be computed, which expresses the excess regret of choosing $\widetilde \mech$ over $\mech$ \emph{modulated by the defender's beliefs} about the adversary's prior.
Incorporating such adversarial priors has recently witnessed growing interest \citep{balle2022reconstructing, jayaraman2021revisiting}.

Our method is a principled probabilistic extension of the recommendation by \citet{carlini2022membership} to choose the mechanism whose trade-off function offers higher Type-II errors at \say{low $\alpha$}.
This recommendation requires a (more or less arbitrary) choice of a \say{low} $\alpha$; as discussed above, no standardised recommendation on this choice exists, leading to poor comparability of results, and potentially skewed reporting.
Moreover, the technique does not take all possible adversaries into account.

These shortcomings are addressed by our proposed technique, as shown in \cref{fig:bayesian}, which compares two SGMs: $\mech$ (blue) and $\widetilde \mech$ (red).
Without any hyper-prior (\cref{fig:bayes-original}), $\deltadiv{\mech}{\widetilde \mech}= 0.01 < 0.02 = \deltadiv{\widetilde \mech}{\mech}$, indicating that choosing $\mech$ is slightly riskier in the worst-case.
Applying the Jeffreys hyper-prior (\cref{fig:bayes-jeffreys}), which expresses a minimal set of assumptions about the adversary, yields $\Delta^{\mathrm{Beta}}(\mech \parallel \widetilde \mech)=0.007 < 0.015 = \Delta^{\mathrm{Beta}}(\widetilde \mech \parallel \mech)$, expectedly not changing the ranking.
However, when the more pessimistic $\mathrm{UQuadratic}[0,1]$ hyper-prior is applied (\cref{fig:bayes-pessimistic}), which models an adversary with strong prior beliefs, we obtain $\Delta^{\mathrm{UQuad}}(\mech \parallel \widetilde \mech)=0.014 > 0 = \Delta^{\mathrm{UQuad}}(\widetilde \mech \parallel \mech)$, indicating that --against an informed adversary-- one would consistently prefer $\mech$.

\begin{figure*}[ht]
  \centering
  \begin{subfigure}[b]{0.33\textwidth}
    \centering
    \includegraphics[width=\linewidth]{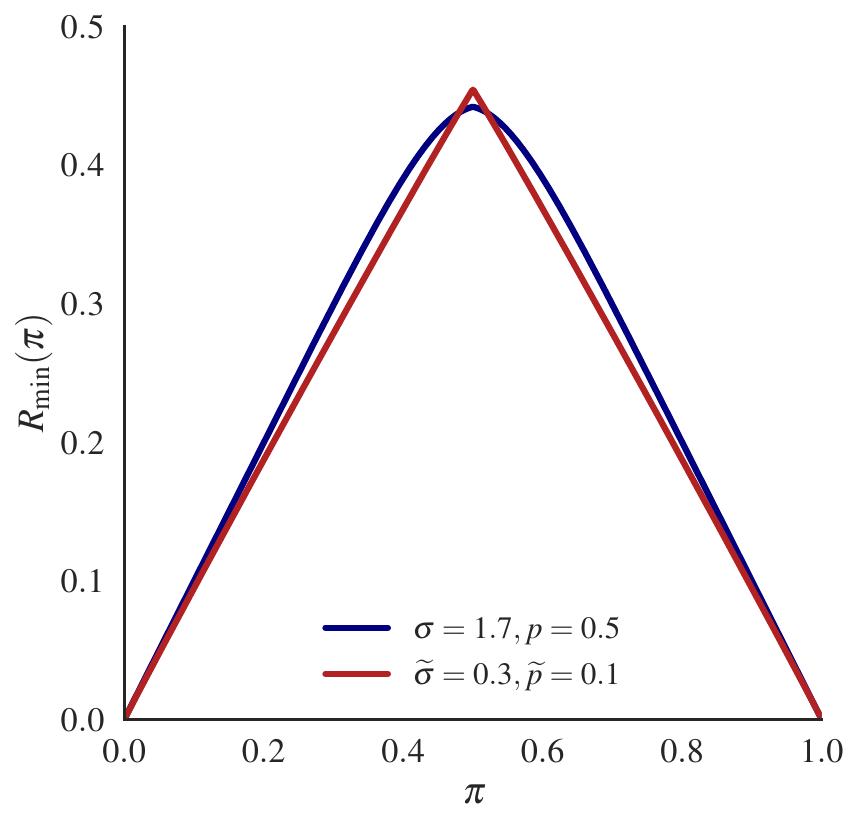}
    \caption{No hyper-prior}
    \label{fig:bayes-original}
  \end{subfigure}
  \begin{subfigure}[b]{0.33\textwidth}
    \centering
    \includegraphics[width=\linewidth]{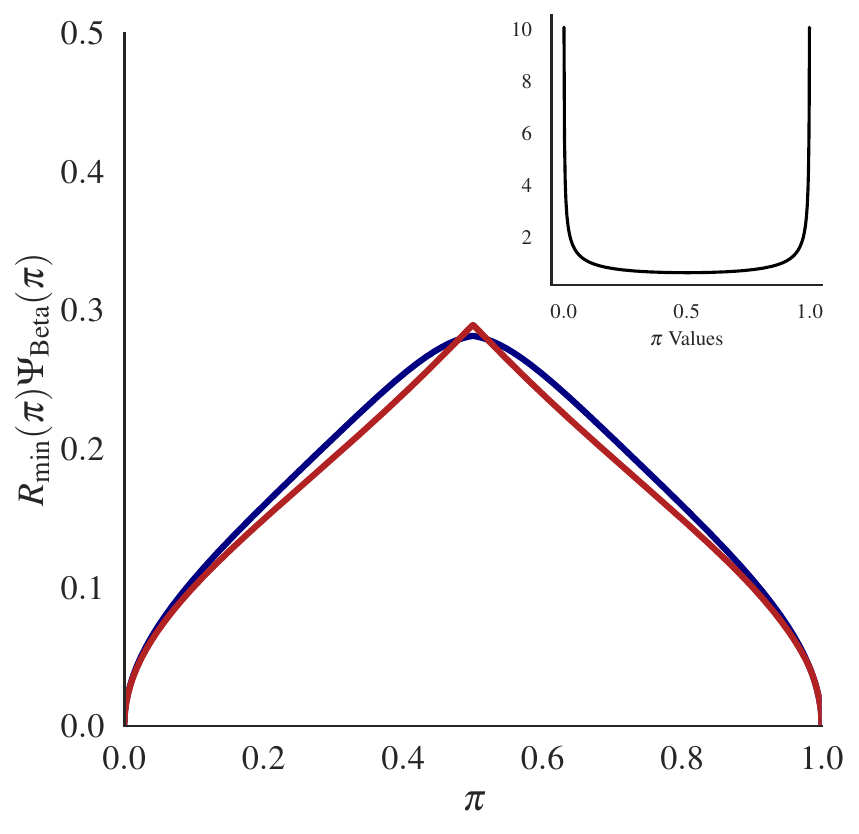}
    \caption{$\mathrm{Beta}(0.5, 0.5)$ hyper-prior}
    \label{fig:bayes-jeffreys}
  \end{subfigure}
  \begin{subfigure}[b]{0.33\textwidth}
    \centering
    \includegraphics[width=\linewidth]{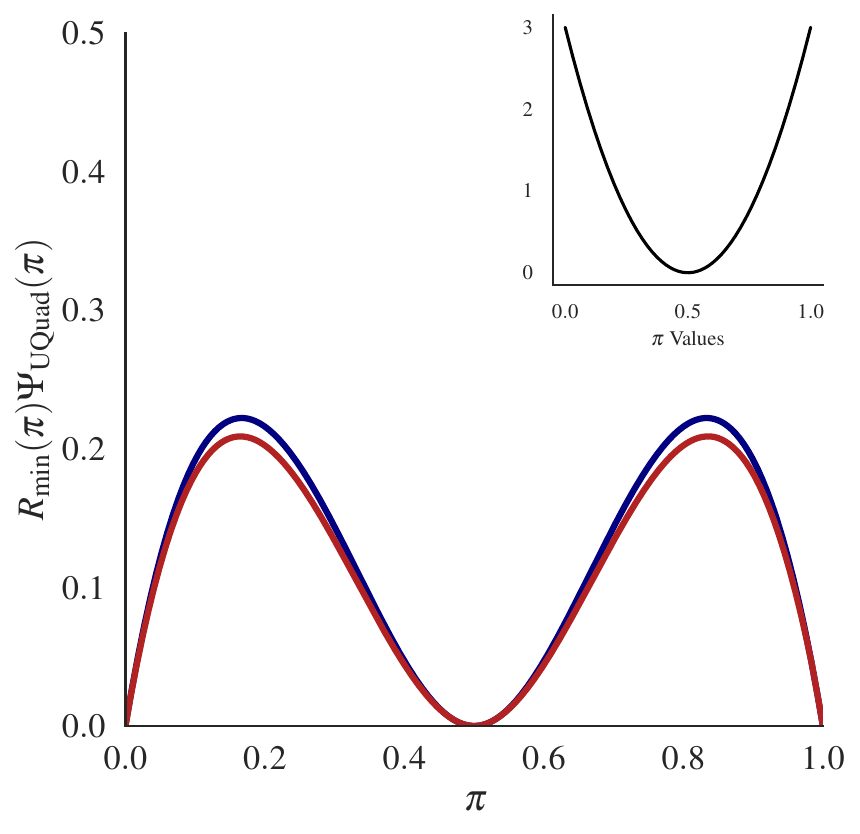}
    \caption{$\mathrm{UQuadratic}[0,1]$ hyper-prior}
    \label{fig:bayes-pessimistic}
  \end{subfigure} 
  \caption{
  Hierarchical Bayesian modelling of adversarial strategies.
  Hyper-prior densities are plotted in the insets.
  }
  \label{fig:bayesian}
\end{figure*}

\mypar{Pareto-Efficient Choice of Noise Multipliers}
Deep learning with DP-SGD presents a trilemma between model accuracy, privacy protection and resource efficiency.
The privacy-accuracy trade-off is well-known in the community, whereas the efficiency trade-off is more apparent when training deep learning models on large-scale datasets.
In the recent work of \citet[Section 5]{de2022unlocking}, the authors posit that there exists an \say{optimal} combination of noise multiplier and number of update steps to achieve the best possible accuracy.
Concretely, the authors calibrate seven CIFAR-10 training runs with different noise multipliers and numbers of steps while fixing the sampling rate to obtain models which all satisfy $(8, 10^{-5})$-DP.
Subsequently, they determine that the \say{optimal} noise multiplier for their application is $\widetilde \sigma = 3.0$, whereas both higher and lower noise multipliers deteriorate the training and validation accuracy.

Here, we re-assess the authors' results using the novel techniques introduced in this paper.
For readability, we will from now equate mechanisms with their noise multipliers, writing e.g. $\deltadiv{\sigma}{\widetilde \sigma=3.0}$ to denote the $\Delta$-divergence from the baseline mechanism $\mech$ with noise multiplier $\sigma=2.0$ and a validation accuracy of $72.6\%$, to $\widetilde \mech$ with $\widetilde \sigma = 3.0$. 
The number of steps and sub-sampling rate are chosen exactly as in \citet[Figure 6]{de2022unlocking}.
\cref{fig:de-et-al-comparison} summarises our observations.
First, note that, even though the mechanisms are all nominally calibrated to $(8, 10^{-5})$-DP, they are not equal in the sense of \cref{lemma:levy}.
This is not unexpected, as it is the same phenomenon observed in \cref{fig:teaser}, and it once again underscores the pitfalls of relying on a single $\epsdel$-pair to calibrate the SGM. 
More interestingly, the $\Delta$-divergence from the baseline increases monotonically with increasing noise multipliers.
This introduces an additional \say{dimension} to the result of \citet{de2022unlocking}:
Choosing $\widetilde \mech$ to have $\widetilde \sigma=3.0$ is not actually an \say{optimal} choice but --at best-- a \emph{Pareto efficient} choice in terms of balancing accuracy and excess vulnerability over $\mech$. 
In particular, choosing $\widetilde\sigma$ to be larger or smaller than $\widetilde\sigma=3.0$ cannot simultaneously increase accuracy and decrease excess vulnerability over $\mech$. 
Thus, in this case, all mechanisms with $\widetilde\sigma>3$ are \emph{Pareto inefficient} choices, since one could simultaneously increase accuracy and decrease excess vulnerability over $\mech$ by choosing $\widetilde\sigma=3.0$.

\begin{figure}[htbp]
  \centering
  \includegraphics[width=\columnwidth]{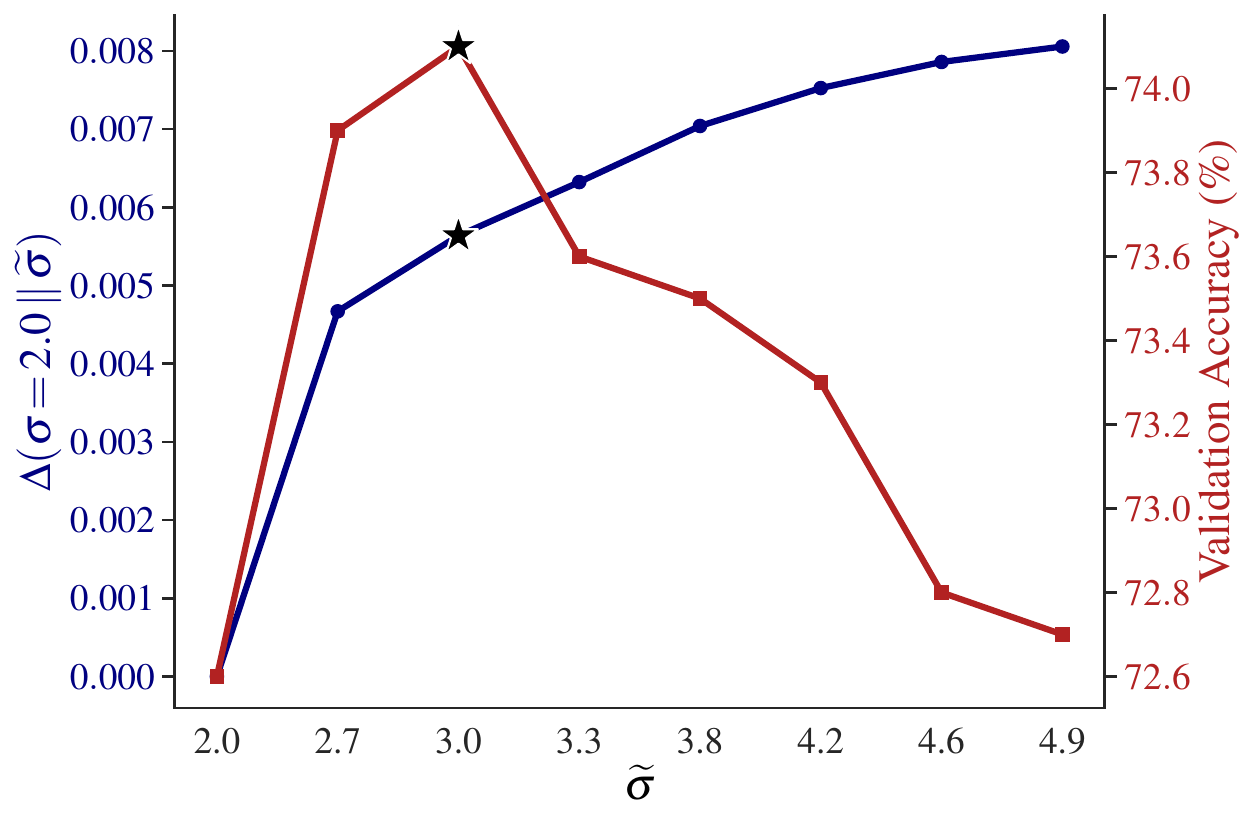}
  \caption{
  Validation accuracy (red) and $\Delta$-divergence values (blue) for mechanisms satisfying $(8, 10^{-5})$-DP.
  The combinations to the right of $\star$ are not Pareto efficient in terms of balancing accuracy and excess vulnerability over $\mech$.
  }
  \label{fig:de-et-al-comparison}
\end{figure}

\mypar{Effect of DP-SGD Parameters on the $\Delta$-Divergence}
To further examine the effect of mechanism parameter choices on the $\Delta$-divergence, \cref{fig:3dplot} investigates switching from a base SGM $\mech$ with $p = 0.01, N = 500$ and $\sigma = 0.54$ to $\widetilde \mech$, where $\widetilde p \in [0.04, 0.9], \widetilde N \in [534, 1500]$ and the resulting $\widetilde\sigma \in [0.55, 21]$.
All mechanisms are calibrated to $(8, 10^{-5})$-DP using the numerical system by \citet{doroshenko2022connect} and the absolute calibration error in terms of $\eps$ is $\leq 0.00042$.
A monotonic increase in the $\Delta$-divergence with the noise multiplier is observed, culminating in a maximum divergence value of around $0.12$. 
In particular, increases in $\widetilde p$ and $\widetilde N$ are associated with an increase in the $\Delta$-divergence. 
Moreover, the $\Delta$-divergence exhibits greater sensitivity to variations in $\widetilde p$ compared to changes in $\widetilde N$. 

\cref{fig:3dplot} suggests that the maximal excess vulnerabilities are realised by large $\widetilde p$ and $\widetilde N$.
This once again highlights not just that these vulnerabilities remain completely undetected when only reporting that the mechanisms \say{satisfy $(8, 10^{-5})$-DP}, but also that the current best practices in selecting SGM parameters for training large-scale ML models with DP, i.e.\@ large sampling rates and many steps \citep{de2022unlocking, berrada2023unlocking} unfortunately correspond to the most vulnerable regime.

\begin{figure}[htbp]
  \centering
  \includegraphics[width=\columnwidth]{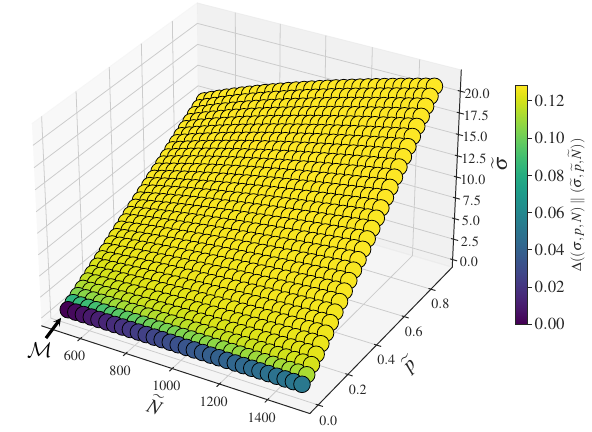}
  \caption{Association between the choice of SGM parameters $(\widetilde \sigma, \widetilde p, \widetilde N)$ and the $\Delta$-divergence compared to a baseline mechanism $\mech$ (marked with an arrow).}
  \label{fig:3dplot}
\end{figure}

\mypar{From $\Delta$-Divergences to Attack Vulnerability}
To provide a practical understanding of what an excess vulnerability of $0.12$ (i.e.\@ the maximum attained in \cref{fig:3dplot}) means in practice, we revisit the example by \citet{hayes2023bounding} discussed in the introduction.
Recall that the authors empirically demonstrated that calibrating different SGMs to a constant $\epsdel$-guarantee while changing the underlying noise multiplier and sampling rate leads to mechanism with disparate vulnerability against data reconstruction attacks. 

Using our newly introduced techniques, we can now formally substantiate this finding, shown in \cref{fig:rero}.
The horizontal axis shows the $\Delta$-divergence value from $\mech$ with $\sigma=0.6, p=0.01$) to a series of mechanisms $\widetilde \mech$ with increasing values of $\widetilde p$ and $\widetilde \sigma$, where all $\widetilde \mech$ are calibrated to $(4, 10^{-5})$-DP as previously described. 
The vertical axis shows the theoretical upper bound on a successful data reconstruction attack against the model (called \emph{Reconstruction Robustness} by \citet{hayes2023bounding}).
We note that these theoretical upper bounds are matched almost exactly by actual attacks, so the bounds are almost tight in practice.
These mechanism settings and resulting reconstruction attack bounds are identical to \citet[Figure 5]{hayes2023bounding}.

Observe that the probability of a successful data reconstruction attack increases almost exactly linearly with the $\Delta$-divergence of the mechanisms from the baseline.
This lends the notion of \say{excess regret} a concrete quantitative interpretation in terms of attack vulnerability, as in this example, an increase of the $\Delta$-divergence from $0$ to $0.12$ corresponds to a $15\%$ (!) vulnerability increase to data reconstruction attacks compared to the baseline.

\begin{figure}[htbp]
  \centering
  \includegraphics[width=\columnwidth]{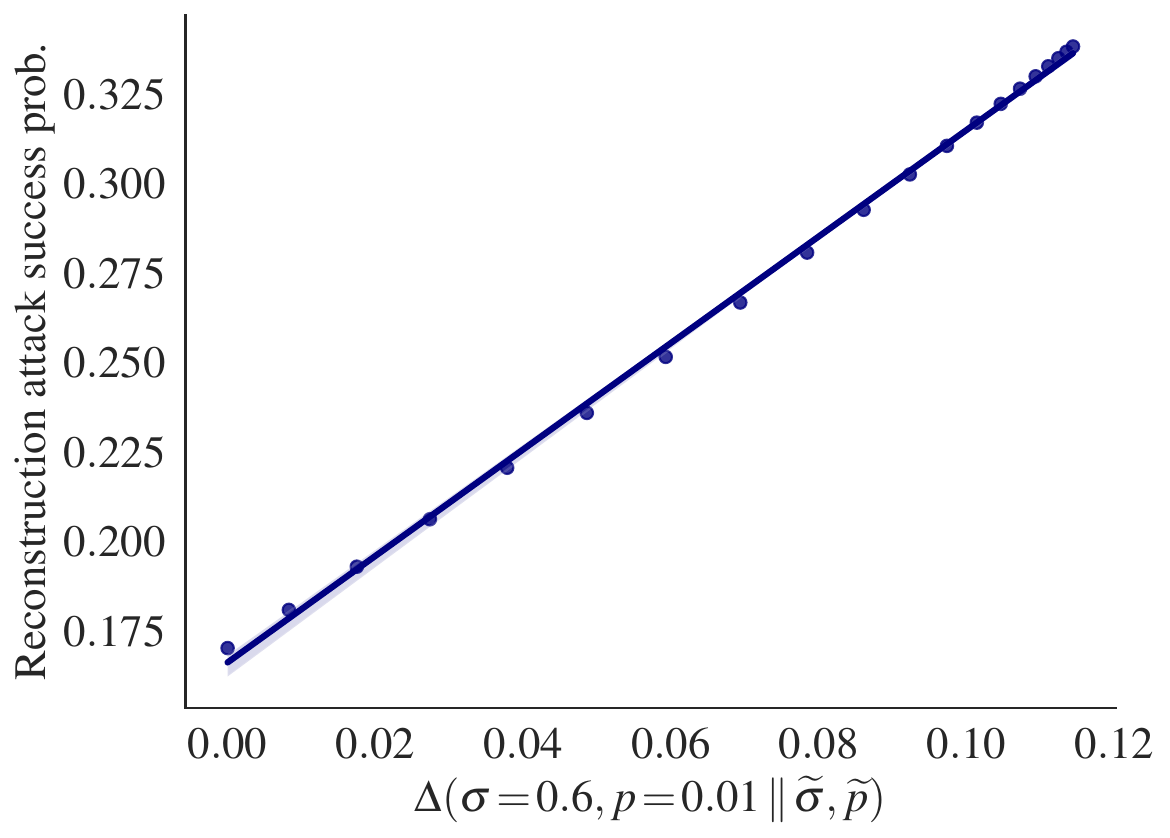}
  \caption{
  $\Delta$-divergence values from a baseline mechanism compared to upper bounds on the success rate of data reconstruction attacks against DP-SGD.
  }
  \label{fig:rero}
\end{figure}

\section{Discussion and Conclusion}
In this work, we established novel mechanism comparison techniques based on the rigorous foundations of the Blackwell theorem.
Our results extend previous works by allowing for principled comparisons between DP mechanisms whose privacy guarantees coincide at the calibration point but differ elsewhere.
Operationally, this enables expressing the regret of switching from one mechanism to another in terms of excess privacy vulnerability in the worst case. 

Our results are supported by a novel Bayesian interpretation, which allows for modelling adversarial auxiliary information.
Such adversarial modelling is currently witnessing increasing interest, as it enables a principled reasoning about adversarial capabilities both in and beyond the worst case \citep{balle2022reconstructing}.
Moreover, our analysis characterises the properties of mechanisms that determine the order of universal Blackwell dominance that inevitably emerges under sufficiently many compositions, which facilitates the application of our results to DP-SGD.
Employing our results to large-scale DP-SGD workflows reveals that calibrating mechanism parameters to attain optimal accuracy must be mindful of associated privacy vulnerabilities, emphasising the risks of the common practice of reporting privacy guarantees in terms of a single $\epsdel$-pair.
Thus, while approximate mechanism comparisons quantify differences between mechanism in terms of privacy vulnerability, we have shown that they can be integrated with considerations of model utility in private ML.
In future work, we aim to additionally incorporate factors such as the cost of training models, into our framework.

In conclusion, the widespread adoption of privacy-enhancing technologies like DP relies heavily on a correct and transparent understanding of privacy guarantees.
Our findings further this understanding, and offer tools to aid informed decision-making in privacy-preserving ML.

\section*{Impact Statement}
We improve the granularity of DP analyses by introducing a novel method to compare privacy guarantees, which can be applied to enhance the security properties of sensitive data processing systems, benefiting individuals.
We foresee no specific negative social consequences of our work.

\section*{Acknowledgements}
GK received support from the German Federal Ministry of Education and Research and the Bavarian State Ministry for Science and the Arts under the Munich Centre for Machine Learning (MCML), from the German Ministry of Education and Research and the the Medical Informatics Initiative as part of the PrivateAIM Project, from the Bavarian Collaborative Research Project PRIPREKI of the Free State of Bavaria Funding Programme \say{Artificial Intelligence -- Data Science}, and from the German Academic Exchange Service (DAAD) under the Kondrad Zuse School of Excellence for Reliable AI (RelAI). 
\bibliography{bibliography}
\bibliographystyle{icml2024}

\newpage
\appendix
\onecolumn
\section*{Appendix}

\section{Extended Background}\label{sec:extended-Background}
In this section, we provide an extended introduction to the fundamental concepts used in our work for the purpose of self-containedness and for readers without extensive background knowledge of DP. 

\paragraph{$\epsdel$-DP}
A randomised mechanism $\mech$ satisfies $\epsdel$-DP if, for all adjacent pairs of databases $D$, $D'$ (i.e.\@ differing in the data of a single individual), and all $S \subseteq \text{Range}(\mech)$:
\begin{equation}
    \prob{\mech(D) \in S} \leq \re^{\eps} \prob{\mech(D') \in S} + \delta.
\end{equation}
We will denote adjacent $D, D'$ by $D \simeq D'$.

\paragraph{(Log-) Likelihood Ratios}
The likelihood ratios (LRs) are defined as:
\begin{align}
  \wbx &= \frac{\mathcal{L}(\omega \mid \mech(D'))}{\mathcal{L}(\omega \mid \mech(D))}, \omega \sim \mech(D) \textand \\
  \wby &= \frac{\mathcal{L}(\omega \mid \mech(D'))}{\mathcal{L}(\omega \mid \mech(D))}, \omega \sim \mech(D'),
\end{align}
for arbitrary $D \simeq D'$, where $\mathcal{L}(\omega \mid \cdot)$ denotes the likelihood of $\omega$ and $\sim$ denotes \say{is sampled from}.
Moreover, the log LRs (LLRs) are defined as $X = \log(\wbx)$ and $Y = \log(\wby)$.

The LLRs are customarily called the \emph{privacy loss random variables} (PLRVs), and their densities, denoted $p_X, p_Y$, are called the privacy loss distributions (PLDs).
We will make no other assumptions about $(P,Q)$ other than that they are mutually absolutely continuous for all $D \simeq D'$.
This only excludes mechanisms whose PLDs have non-zero probability mass at $\pm \infty$ e.g.\@ mechanisms which can fail catastrophically, but allows us to study almost all mechanisms commonly used in private statistics/ML.

\paragraph{Hypothesis Testing and $f$-DP}
In the hypothesis testing interpretation \citep{wasserman2010statistical, dong2022gaussian}, a MIA adversary observes a mechanism outcome $\omega$ and establishes the following hypotheses:
\begin{equation}
  H_0: \omega \sim \mech(D) \textand H_1: \omega \sim \mech(D')
\end{equation}
for arbitrary $D \simeq D'$.
$H_0$ is called the null hypothesis and $H_1$ the alternative hypothesis and $H_0$ is tested against $H_1$ using a randomised rejection rule (i.e.\@ test) $\phi: \omega \mapsto \phi(\omega)\in [0,1]$, where $0$ encodes \say{reject $H_0$} and $1$ \say{fail to reject $H_0$}. 
We then denote the Type-I error of $\phi$ by $\aphi = \bE_{\omega \sim \mech(D)}[\phi(\omega)]$ and its Type-II error by $\bphi = 1-\bE_{\omega \sim \mech(D')}[\phi(\omega)]$, where the expectation is over the joint randomness of $\phi$ and $\mech$.

The Neyman-Pearson lemma \cite{neyman1933ix} states that the test with the lowest Type-II error at a given level of Type-I error (called the most powerful test) is constructed by thresholding the (L)LR test statistic; therefore the PLRVs serve as the test statistics for the adversary's hypothesis test.
At a level $\alpha$ fixed by the adversary, the trade-off function $T$ of the most powerful test is given by:
\begin{equation}
  T(\mech(D),\mech(D'))(\alpha) = \inf_{\phi} \lbrace \bphi \mid \aphi \leq \alpha \rbrace.
\end{equation}
$f$-DP \citep{dong2022gaussian} is defined by comparing $T$ to a \say{reference} trade-off function.
Formally, $\mech$ satisfies $f$-DP if, for a trade-off function $f$ and for all $D \simeq D'$:
\begin{equation}
  \forallalpha:\;\sup_{D \simeq D'} T(\mech(D),\mech(D'))(\alpha) \geq f(\alpha).
\end{equation}
Trade-off functions are convex, continuous and weakly decreasing with $f(0)=1$ and $f(1)=0$.
We will, without loss of generality, extend any trade-off function $f$ to $\mathbb{R} \to [0,1]$ and set $f(x)=1, x<0$ and $f(x)=0, x>1$.

\paragraph{Dominating Pairs}
Working with pairs of adjacent databases is not desirable, and not even always feasible when studying general DP mechanisms.
As shown by \citet{zhu2022optimal}, it is instead possible to fully characterise the properties of DP mechanisms by a pair of distributions, called the mechanism's \emph{dominating pair}.
Formally, a pair of distributions $(P,Q)$ is called a dominating pair for mechanism $\mech$ if, for all $\alpha \in [0,1]$ it satisfies:
\begin{equation}
  \sup_{D, D'}T(P,Q)(\alpha) \leq T(\mech(D), \mech(D'))(\alpha).
\end{equation}
In particular, when for all $\alpha \in [0,1]$ it holds that:
\begin{equation}
  \sup_{D, D'}T(P,Q)(\alpha) = T(\mech(D), \mech(D'))(\alpha),
\end{equation}
$(P,Q)$ is called a \emph{tightly dominating pair}.

As noted by \citet{zhu2022optimal}, a tightly dominating pair which encapsulates the worst-case properties of the mechanism, exists or can always be constructed.
Therefore, we will from now on write $\mech:(P,Q)$ to indicate that $(P,Q)$ is a tightly dominating pair of $\mech$, denote the trade-off function corresponding to the most powerful test between $P$ and $Q$ by $f$, its Type-I and Type-II errors by $\alpha$, $\beta(\alpha)$ and the LLRs/PLRVs corresponding to $P$ and $Q$ by $\wbx, X$ and $\wby, Y$.
The trade-off function $f$ can be constructed from $X$ and $Y$ as follows.
Denoting the CDF by $F$:
\begin{equation}\label{eq:tradeoff-construction}
  f(\alpha) = F_{Y}(F_{X}^{-1}(1-\alpha)).
\end{equation}

\paragraph{Privacy Profile}
As shown by \citet{dong2022gaussian, gopi2021numerical}, the privacy profile of $\mech$ can be constructed as:
\begin{equation}
  \deleps = 1 + f^{\ast}(P,Q)(-\re^{\eps}) = \widebar{F}_{Y}(\eps) - \re^{\eps}\widebar{F}_{X}(\eps),
\end{equation}
where $f^{\ast}$ is the convex conjugate and $\widebar{F}$ the survival function.
The privacy profile can also be defined through the \emph{hockey-stick divergence} of order $\re^{\eps}$ of $P$ to $Q$:
\begin{equation}
  \mathsf{H}_{\re^{\eps}}(P \parallel Q) = \int \max \left \lbrace P(x) - \re^{\eps}Q(x), 0 \right \rbrace \rd x = \deleps.
\end{equation}
Note that, for $\eps = 0$, $\mathsf{H}_{1}(P \parallel Q) = \delta(0) = \dtv(P,Q)$,
where 
\begin{equation}
  \dtv(P,Q) = \nicefrac{1}{2}\int \left \lvert P(x) - Q(x) \right \rvert_1 \rd x
\end{equation}
is the total variation distance.

Additionally, the following property holds:
\begin{equation}
  \min_{\alpha\in[0,1]}(\alpha + \beta(\alpha)) = 1-\dtv(P,Q),
\end{equation}
which links the properties of the privacy profile and the trade-off function.
This also allows us to define the \emph{MIA advantage} \citep{yeom2018privacy} of the adversary as follows:
\begin{equation}
  \mathsf{Adv} = 1- \min_{\alpha\in[0,1]} (\alpha + \beta(\alpha)) = \dtv(P,Q).
\end{equation}

\paragraph{Rényi-DP}
Rényi DP (RDP) \citep{mironov2017renyi} is a DP interpretation with beneficial composition properties. 
A mechanism $\mech:(P,Q)$ satisfies $(t, \rho(t))$-RDP if it holds that:
\begin{equation}
  \renyidiv{t}{P}{Q} \leq \rho(t) \; \forall t \geq 1
\end{equation}
for all adjacent $(D, D')$, where $\mathsf{D}_t$ is the Rényi divergence of order $t$.
The conversion between $f$-DP and the privacy profile is exact, but conversions from RDP to either of the aforementioned are not, as RDP lacks a hypothesis testing interpretation \citep{balle2020hypothesis, zhu2022optimal}.

\section{Additional Results}

\subsection{Bayes Error Functions}
Here, we demonstrate the construction of the minimum Bayes error function $\rmin$ from the trade-off function $f$ and \textit{vice versa} using the example of a Gaussian mechanism with $\sigma^2=1$ on a function with unit global sensitivity. 
\cref{fig:bayes-from-tradeoff} shows the construction of $\rmin$ from $f$, while \cref{fig:tradeoff-from-bayes} shows the construction of $f$ from $\rmin$. 
Both directions incur no loss of information, and thus the minimum Bayes error is equivalent to the trade-off function in terms of fully characterising the mechanism.

\begin{figure}[h]
  \centering
  \begin{subfigure}{0.45\linewidth}
    \centering
    \includegraphics[width=\linewidth]{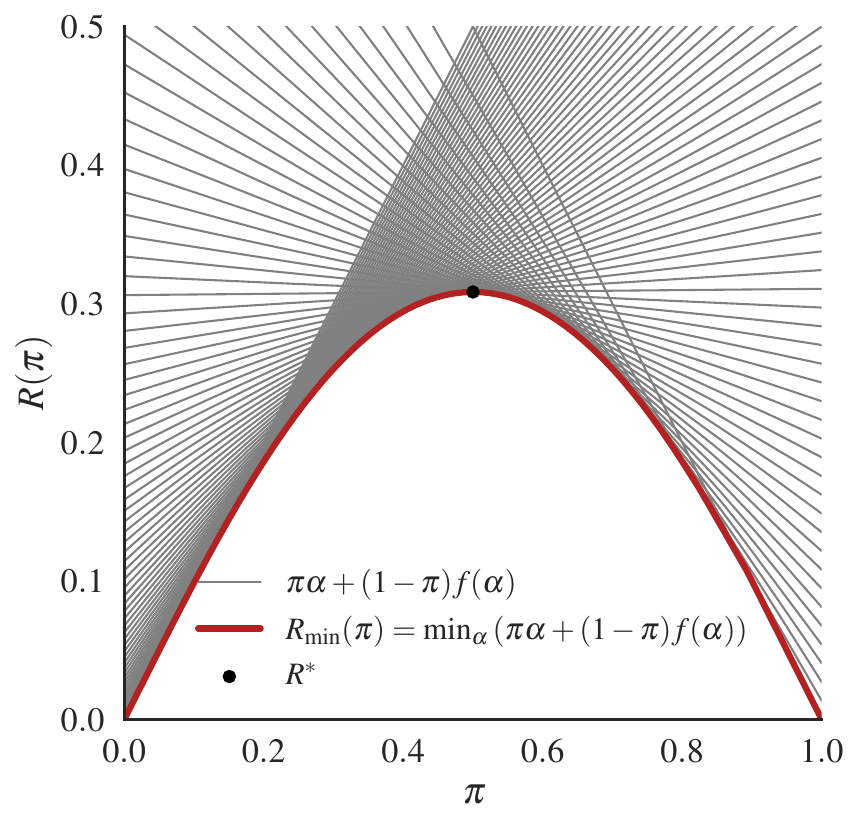}
    \caption{$f \to \rmin$}
    \label{fig:bayes-from-tradeoff}
  \end{subfigure}
  \hfill
  \begin{subfigure}{0.45\linewidth}
    \centering
    \includegraphics[width=\linewidth]{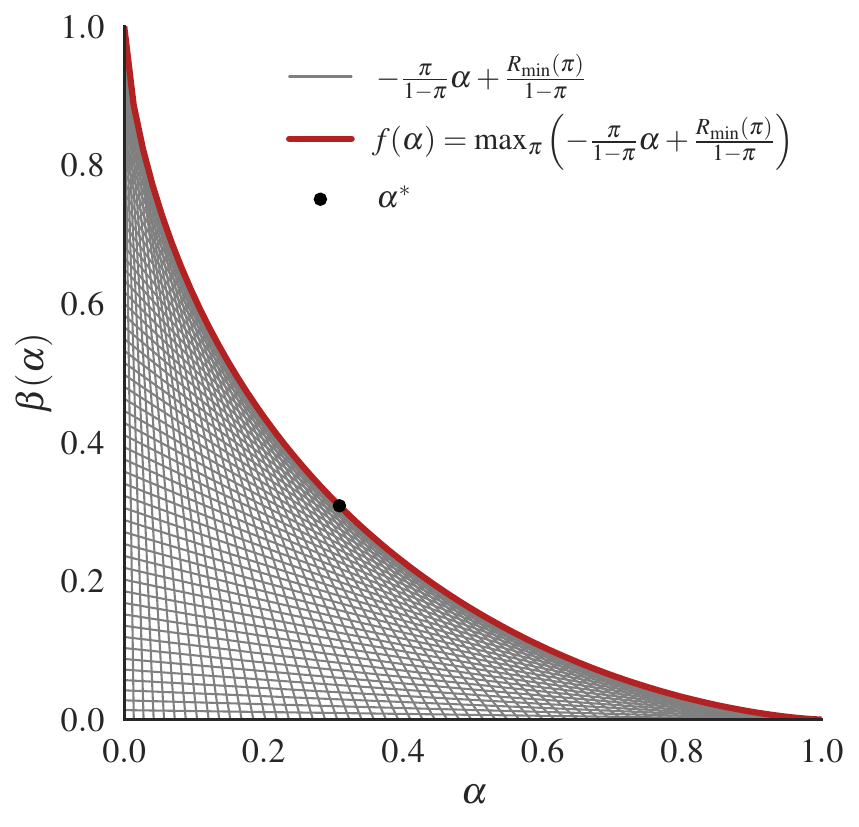}
    \caption{$\rmin \to f$}
    \label{fig:tradeoff-from-bayes}
  \end{subfigure}
  \caption{
  Lossless conversion between $\rmin$ and $f$.
  $\rast$ and $\alpha^{\ast}$ are the minimax Bayes error and fixed point of $f$, respectively.
  }
  \label{fig:tradeoff-bayes}
\end{figure}

\subsection{Interpreting $\Delta^{\leftrightarrow}$ as the Lévy Distance}\label{sec:levy-equivalence}
Following the discussion in \cref{sec:metrising} regarding the conceptual equivalence of the $\Delta$-divergence and the Lévy distance between random variables, we here formally introduce and prove the statement. 
\begin{alemma}
  For any mechanism with trade-off function $f$, $(U,W)$ are a tightly dominating pair, where $U$ is the continuous uniform distribution on the unit interval and $W$ has CDF: $f(1-\alpha)$.
  Moreover, for two mechanisms $\mech :(U, W)$ and $\widetilde \mech:(U, W')$, it holds that $\deltatwosided{\mech}{\widetilde \mech} = \Lambda(W, W')$, where $\Lambda$ denotes the Lévy distance.
\end{alemma}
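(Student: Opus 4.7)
The plan is to tackle the two claims separately and in order. For the first claim, I would simply compute the trade-off function of the pair $(U,W)$ using the explicit formula \cref{eq:tradeoff-construction}, namely $T(X,Y)(\alpha) = F_Y(F_X^{-1}(1-\alpha))$. Since $U$ is uniform on $[0,1]$, its quantile function is $F_U^{-1}(1-\alpha) = 1-\alpha$; and since $W$ is specified to have CDF $F_W(x) = f(1-x)$, substitution yields $T(U,W)(\alpha) = F_W(1-\alpha) = f(1-(1-\alpha)) = f(\alpha)$. Because the trade-off function of $(U,W)$ literally equals $f$, the pair is automatically \emph{tightly} dominating for any mechanism whose trade-off function is $f$. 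I would also verify in passing that $F_W$ is a bona fide CDF on $[0,1]$: monotonicity from the decreasingness of $f$, and the correct endpoints $F_W(0)=f(1)=0$, $F_W(1)=f(0)=1$ from the standing assumptions on trade-off functions.

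For the second claim, the strategy is to read off both sides of the desired equality from the sandwich inequality of \cref{lemma:levy} via a single change of variables. By the definition of $\deltatwosided{\mech}{\widetilde \mech}$ as the maximum of the two one-sided divergences and by \cref{lemma:levy}, $\Delta^{\leftrightarrow} := \deltatwosided{\mech}{\widetilde\mech}$ is the smallest $\kappa\ge 0$ such that
\begin{equation*}
f(\alpha+\kappa) - \kappa \;\le\; \widetilde f(\alpha) \;\le\; f(\alpha-\kappa) + \kappa \qquad \text{for all } \alpha.
\end{equation*}
Setting $\alpha = 1-x$ and invoking the identities $F_W(x) = f(1-x)$ and $F_{W'}(x) = \widetilde f(1-x)$ gives $f(\alpha+\kappa) = F_W(x-\kappa)$ and $f(\alpha-\kappa) = F_W(x+\kappa)$, so the sandwich becomes
\begin{equation*}
F_W(x-\kappa) - \kappa \;\le\; F_{W'}(x) \;\le\; F_W(x+\kappa) + \kappa \qquad \text{for all } x,
\end{equation*}
which is precisely the defining condition of the Lévy distance $\Lambda(W,W')$. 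Since both $\Delta^{\leftrightarrow}$ and $\Lambda(W,W')$ are the \emph{infima} of $\kappa$ over the same family of inequalities, they coincide.

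The main obstacle, and the one place that warrants extra care, is to make sure the two infima are really taken over the same set after the change of variables. Because the paper extends trade-off functions to all of $\mathbb{R}$ by $f(x)=1$ for $x<0$ and $f(x)=0$ for $x>1$, the sandwich actually has content for $\alpha \in [-\kappa, 1+\kappa]$, i.e.\ for $x \in [-\kappa, 1+\kappa]$ after substitution; outside that range both sides are trivial. On the CDF side, $W$ and $W'$ are supported in $[0,1]$, so $F_W$ is $0$ below $0$ and $1$ above $1$, making the Lévy-type inequalities for $x \notin [-\kappa, 1+\kappa]$ vacuously true as well. Hence the two families of constraints agree on every $x$, the infima over $\kappa$ coincide, and the identification $\Delta^{\leftrightarrow} = \Lambda(W,W')$ follows. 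No further heavy machinery is needed once this bookkeeping is in place.
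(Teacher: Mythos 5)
Your proposal is correct and follows essentially the same route as the paper: computing $T(U,W)(\alpha)=F_W(F_U^{-1}(1-\alpha))=f(\alpha)$ for tight domination, then substituting $F_W(x)=f(1-x)$, $F_{W'}(x)=\widetilde f(1-x)$ into the Lévy-distance definition and reparameterising $\alpha \mapsto 1-\alpha$ to match the sandwich characterisation of $\Delta^{\leftrightarrow}$ from \cref{lemma:levy}. Your added checks (that $F_W$ is a genuine CDF, and that the constraints are vacuous outside $[-\kappa,1+\kappa]$ on both sides) are harmless refinements of the same argument rather than a different approach.
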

\begin{proof}
  We will first show that, if $\mech$ is tightly dominated by $(P,Q)$ and has trade-off function $f$, it is also tightly dominated by $(U, W)$.
  From \cref{eq:tradeoff-construction}, $T(U,W)$ is constructed as follows:
  \begin{equation}
    T(U,W)(\alpha) = F_W(F^{-1}_U(1-\alpha)) = F_W(1-\alpha) = f(1-(1-\alpha)) = f(\alpha),
  \end{equation}
  which follows since the inverse CDF (quantile function) of the continuous uniform distribution is the identity function.
  Therefore, $T(U,W)(\alpha) = f(\alpha)$, for all $\alpha\in[0,1]$, hence $(U,W)$ is a tightly dominating pair for $\mech$.
  Next, recall the definition of the Lévy distance:
  \begin{equation}
    \Lambda(W, W') = \inf\left\{\lambda\geq 0 \mid \forall x\in\mathbb{R}:\; F_W(x - \lambda) -\lambda \leq F_{W'}(x) \leq F_W(x + \lambda) + \lambda ) \right\}.
  \end{equation}
  Denoting $f,\widetilde f$ the trade-off functions of $\mech, \widetilde \mech$ respectively and inserting the respective CDFs of $W, W'$, we obtain:
  \begin{align}
    \Lambda(W, W') &= 
    \inf \{\lambda\geq 0\mid \forall \alpha\in\mathbb{R}:\; f(1-(\alpha - \lambda)) -\lambda \leq \widetilde f(1-\alpha) \leq f(1-(\alpha + \lambda)) + \lambda \} \\
    &=\inf \{\lambda\geq 0\mid \forall \alpha\in\mathbb{R}:\; f(1-\alpha + \lambda) -\lambda \leq \widetilde f(1-\alpha) \leq f(1-\alpha -\lambda) + \lambda \}.
  \end{align}
  We can reparameterise the inequality chain from $1-\alpha$ to $\alpha$ to obtain:
  \begin{equation}
    \Lambda(W, W') = \inf \{\lambda\geq 0\mid \forall \alpha\in\mathbb{R}:\; f(\alpha + \lambda) -\lambda \leq \widetilde f(\alpha) \leq f(\alpha -\lambda) + \lambda \}.
  \end{equation}
  Noticing that the result is identical to the definition of $\deltatwosided{\mech}{\widetilde \mech}$ completes the proof.
\end{proof}

\subsection{$\Delta$-Divergence Implementation}\label{sec:divergence-computation}
The following code listing implements the $\Delta$-divergence computation corresponding to the mechanisms in \cref{fig:comparison-demo} in Python.
As seen, the algorithm only requires oracle access to a function implementing the trade-off function of the mechanism.

\begin{lstlisting}[language=Python]
from scipy.stats import norm, laplace
import numpy as np
from functools import partial
from scipy.optimize import minimize_scalar
from multiprocessing import Pool
from os import cpu_count
from typing import Callable, Sequence, Union

def f_gauss(alpha: Union[Sequence[float], float], mu: float) -> float:
    "Gaussian mechanism trade-off function at alpha with parameter mu."
    assert (alpha >= np.zeros_like(alpha)).all() and (
        alpha <= np.ones_like(alpha)
    ).all(), "alpha must be in [0, 1]"
    assert mu >= 0, "mu must be non-negative"
    return norm.cdf(norm.isf(alpha) - mu)

def f_lap(alpha: Union[Sequence[float], float], mu: float) -> float:
    "Laplace mechanism trade-off function at alpha with parameter mu."
    assert (alpha >= np.zeros_like(alpha)).all() and (
        alpha <= np.ones_like(alpha)
    ).all(), "alpha must be in [0, 1]"
    assert mu >= 0, "mu must be non-negative"
    return laplace.cdf(laplace.isf(alpha) - mu)

def _compute_one_rmin(
    pi: float,
    f: Callable[[Union[Sequence[float], float]], float],
) -> float:
    assert 0 <= pi <= 1, "pi must be in [0, 1]"

    def func(alpha: float) -> float:
        assert 0 <= alpha <= 1, "alpha must be in [0, 1]"
        return pi * alpha + (1 - pi) * f(alpha)

    return minimize_scalar(func, bounds=(0, 1)).fun

def rmin(
    *,
    f: Callable[[Union[Sequence[float], float]], float],
    tol: float = 1e-4,
    n_jobs: int = -1,
) -> Union[Sequence[float], float]:
    "Bayes error function corresponding to f computed with tolerance tol."
    assert tol > 0, "tol must be positive"
    assert n_jobs == -1 or n_jobs > 0, "n_jobs must be positive or -1"
    N: int = int(np.ceil(1 / tol))
    pis: Sequence[float] = np.linspace(0, 1, N)
    if n_jobs == -1:
        processes = cpu_count()
    else:
        processes = n_jobs
    with Pool(processes) as pool:
        result = np.array(pool.map(partial(_compute_one_rmin, f=f), pis))
    return result

if __name__ == "__main__":
    tol: float = 1e-4
    mu: float = 1.0
    rmin_lap: Sequence[float] = rmin(f=partial(f_lap, mu=mu), tol=tol, n_jobs=-1)
    rmin_gauss: Sequence[float] = rmin(f=partial(f_gauss, mu=mu), tol=tol, n_jobs=-1)
    divergence_gauss_lap: float = max(rmin_gauss - rmin_lap)
    divergence_lap_gauss: float = max(rmin_lap - rmin_gauss)
    print(f"Delta(Gauss || Lap): {divergence_gauss_lap:.3f}") #prints 0.005
    print(f"Delta(Lap || Gauss): {divergence_lap_gauss:.3f}") #prints 0.034
\end{lstlisting}

\subsection{Proofs}\label{subsec:proofs}
\blackwelltheorem*
\begin{proof}
   For a full proof, see the proof of \cref{thm:approx-comparison}, which recovers \cref{thm:blackwell} for $\fD=0$. 
\end{proof}
\approxcomp*
\begin{proof} 
    \phantom{}\newline
    \mypar{(1)}:
    Suppose $\Delta = \deltadiv{\mech}{\widetilde \mech}\leq \fD$. Since trade-off functions are weakly decreasing, we have:
    \begin{align}
      f(\alpha + \fD) - \fD \leq f(\alpha + \Delta) - \Delta \leq \widetilde f(\alpha).
    \end{align}
    Conversely, if $f(\alpha + \fD) - \fD\leq \widetilde f(\alpha)$, then we have $\deltadiv{\mech}{\widetilde \mech}\leq \fD$ due to the infimum definition of the $\Delta$-Divergence.
    \mypar{(1) $\Rightarrow$ (2)}: 
    Suppose for all $-\infty<\alpha<\infty$, we have $f(\alpha + \fD) -\fD \leq \widetilde f(\alpha)$.
    From \citet{dong2022gaussian}, we know that $\delta(\eps) = 1 + f^{\ast}(-\re^{\eps})$, where:
    \begin{align}
      f^{\ast}(x) = \sup_{-\infty<\alpha<\infty}  (x\alpha - f(\alpha)).
    \end{align}
    denotes the convex conjugate.
    By direct computation of the convex conjugate we obtain:
    \begin{align}
      &\delta(\eps) - 1 = f^*(-\re^\eps) =
      \sup_{-\infty<\alpha<\infty} ( -\re^{\eps}\alpha - (f(\alpha - \fD + \fD) -\fD) - \fD ) \geq 
      \sup_{-\infty<\alpha<\infty} ( -\re^{\eps}\alpha - \widetilde f(\alpha- \fD) -\fD) \\
      &=\sup_{-\infty<\alpha<\infty} (-\re^{\eps}(\alpha + \fD) - \widetilde f(\alpha)) - \fD = 
      \widetilde f^{\ast}(-\re^{\eps}) - \fD \cdot ( 1+ \re^{\eps}) = \widetilde \delta(\eps) - 1 - \fD \cdot (1+\re^{\eps}),
    \end{align}
    which yields the desired inequality.
    
    \mypar{(2) $\Rightarrow$ (1)}: 
    Suppose that, for all $0\leq \eps<\infty$, we have $\delta(\eps) + \fD \cdot(1+e^\eps) \geq \widetilde \delta(\eps)$. 
    Define the function $\widetilde f(\alpha) = \widetilde f(\alpha - \fD) + \fD$. 
    We then have for all $\eps\geq 0$:
    \begin{align}
    &f^*(-\re^\eps) = \delta(\eps)-1 \geq \widetilde \delta(\eps)  - 1 -\fD\cdot(1+\re^\eps) = \widetilde f^*(-\re^\eps) -\fD\cdot(1+\re^\eps) 
    \\
    &= \sup_{-\infty<\alpha<\infty} (-\re^\eps \alpha - \widetilde f(\alpha)) -\fD\cdot(1+\re^\eps) = \sup_{-\infty<\alpha<\infty} (-\re^\eps (\alpha + \fD) - (\widetilde f(\alpha) +\fD)) 
    \\
    &= \sup_{-\infty<\alpha<\infty} (-\re^\eps \alpha - (\widetilde f(\alpha - \fD) +\fD))  = \sup_{-\infty<\alpha<\infty} (-\re^\eps \alpha - \widetilde f(\alpha)) = {\widetilde f}^*(-\re^\eps).
    \end{align}
    This shows $f^*\geq {\widetilde f}^*$, which implies $f\leq \widetilde f$ since the convex conjugate is order-reversing.
    By definition of $\widetilde f$, we showed for all $\alpha$:
    \begin{align}
      f(\alpha) \leq \widetilde f(\alpha - \fD) + \fD.
    \end{align}

    \mypar{(1) $\Rightarrow$ (3)}:
    Suppose for all $\alpha\in[0,1]$ we have $f(\alpha +\fD) -\fD \leq \widetilde f(\alpha)$.
    Let $\alpha\in[0,1]$, such that $\widetilde R_{\min}(\pi)=\pi\alpha + (1-\pi) \widetilde f(\alpha)$. 
    If $\alpha +\fD\in[0,1]$, then we have:
    \begin{align}
      &\rmin(\pi) \leq \pi (\alpha+\fD) + (1-\pi) f(\alpha+\fD) \leq \pi (\alpha+\fD) + (1-\pi) (\widetilde f(\alpha) + \fD)\\
      & = \rmin(\pi) + \fD.
    \end{align}
    In the other, case, we have $\alpha + \fD>1$. But then, $\alpha -\fD\in[0,1]$ since $\alpha\in[0,1]$. 
    Using $f(1)=0=f(\alpha+\fD)$, we also obtain the desired bound:
    \begin{align}
      &\rmin(\pi)\leq \pi + (1-\pi)f(1) \leq \pi(\alpha +\fD) + (1-\pi)f(\alpha+\fD) \leq \pi (\alpha+\fD) + (1-\pi) (\widetilde f(\alpha) + \fD)\\
      & = \rmin(\pi) + \fD.
    \end{align}
    \mypar{(3) $\Rightarrow$ (1)}:
    Suppose $\max_\pi\rmin(\pi) - \widetilde R_{\min}(\pi) \leq \fD$. 
    Let $\alpha\in[0,1]$. If $\alpha +\fD>1$, then trivially $f(\alpha+\fD) -\fD = -\fD \leq 0 = f(\alpha)$ holds. 
    Thus, assume $\alpha +\fD\in[0,1]$. Then, there exists a $\pi\in[0,1]$ such that:
    \begin{align}
    \rmin(\pi) = \pi (\alpha+\fD) + (1-\pi)f(\alpha+\fD).
    \end{align}
    We use the fact that $\widetilde R_{\min}(\pi)\leq \pi \alpha + (1-\pi)\widetilde f(\alpha)$ and obtain:
    \begin{align}
    &\fD \geq \rmin(\pi) - \widetilde R_{\min}(\pi) \geq \pi (\alpha+\fD) + (1-\pi)f(\alpha + \fD) - (\pi\alpha + (1-\pi)\widetilde f(\alpha))\\
    & =\pi \fD + (1-\pi)(f(\alpha +\fD) - \widetilde f(\alpha)).
    \end{align}
    Subtracting $\pi\fD$ from both sides and subsequently dividing by $1-\pi$ yields the desired inequality:
    \begin{align}
      \fD \geq f(\alpha +\fD) - \widetilde f(\alpha).
    \end{align}
\end{proof}

\deltadivbayes*
\begin{proof}
By definition, we have
\begin{equation*}
   \deltadiv{\mech}{\widetilde \mech} = \inf\{ \kappa\geq 0\mid f(\alpha + \kappa) - \kappa \leq \widetilde f(\alpha) \}.
\end{equation*}
Applying clause (3) in \cref{thm:approx-comparison}, we immediately obtain:
\begin{equation*}
   \deltadiv{\mech}{\widetilde \mech} = \inf\{ \kappa\geq 0\mid \max_\pi(\rmin(\pi) - \widetilde R_{\min}(\pi)) \leq \kappa \}.
\end{equation*}
  The $\inf$ is attained at the largest difference in the Bayes error functions, thus:
    \begin{equation}
    \deltadiv{\mech}{\widetilde \mech} = \max_\pi(\rmin(\pi) - \widetilde R_{\min}(\pi)).
  \end{equation}
\end{proof}

\levy*
\begin{proof}
  Let $\fD = \deltadiv{\mech}{\widetilde \mech}$ and $\mathfrak{F} = \deltadiv{\widetilde \mech}{\mech}$, i.e.\@ we have $\mech \succeq_\fD \widetilde \mech$ and $\widetilde \mech \succeq_\mathfrak{F} \mech$. By \cref{thm:approx-comparison}, we have that $f(\alpha + \fD) - \fD \leq \widetilde f(\alpha)$ and $\widetilde f(\alpha) \leq f(\alpha - \mathfrak{F}) + \mathfrak{F}$, for all $\alpha$. Since trade-off functions are weakly decreasing and $\fD,\mathfrak{F}\leq \Delta^\leftrightarrow$, we have:
  \begin{align}
    f(\alpha + \Delta^\leftrightarrow) - \Delta^\leftrightarrow \leq f(\alpha + \fD) - \fD \leq \widetilde f(\alpha) \leq f(\alpha - \mathfrak{F}) + \mathfrak{F} \leq f(\alpha - \Delta^\leftrightarrow) + \Delta^\leftrightarrow.
  \end{align}
\end{proof}


\pseudometric*
\begin{proof}
    We need to show that $\deltatwosided{\mech}{\widetilde\mech}=0\Leftrightarrow \mech = \widetilde\mech$ and that $\Delta^{\leftrightarrow}$ is symmetric and satisfies the triangle inequality.
  Applying \cref{lemma:deltadiv-via-bayes} we obtain:
  \begin{align}
    &\deltatwosided{\mech}{\widetilde \mech} = \max \left \{ \deltadiv{\mech}{\widetilde \mech}, \deltadiv{\widetilde \mech}{\mech} \right \} \\
    &= 
    \max \left\{ \max_{\pi}\left(\rmin\left(\pi\right) - \widetilde R_{\min}\left(\pi\right)\right),\max_{\pi}\left(\widetilde R_{\min}\left(\pi\right) - \rmin\left(\pi\right)\right)\right\} = \lVert \rmin - \widetilde R_{\min} \rVert_{\infty}.
  \end{align}
  Symmetry, triangle inequality, and $\deltatwosided{\mech}{\mech}=0 \Leftrightarrow \mech = \widetilde\mech$ follow from the fact that $\lVert \cdot \rVert_\infty$ is a norm. 
\end{proof}

\begin{remark}\label{remark:equivalence}
  Note that we introduced the order relation $\succeq$ which is implied by the Blackwell theorem as a \textit{partial order}, and refer to mechanisms as \textit{equal} ($\mech =\widetilde \mech$) if and only if they offer identical privacy guarantees.
  Moreover, we refer to $\Delta^{\leftrightarrow}$ as a \textit{metric} on the space of DP mechanisms.
  This choice is motivated by an operational interpretation: \textit{For all practical intents and purposes, mechanisms which provide identical guarantees are the same mechanism}.
  It is however also possible to subject the aforementioned statements to a more formal order-theoretic treatment, where the symbol \say{$=$} is reserved for objects which satisfy \textit{identity}. 
  Since conferring identical privacy guarantees is not sufficient for being identical, it can be argued that it is more appropriate to refer to distinct mechanisms with identical privacy guarantees as being \textit{equivalent}, and writing $\mech \equiv \widetilde \mech$.
  For example, the mechanisms $\mech: (\mathcal{N}(0, 1), \mathcal{N}(1,1))$ and $\widetilde \mech:(\mathcal{N}(0, 2), \mathcal{N}(2, 2))$ have identical trade-off functions, privacy profiles and Bayes error functions and are thus \textit{equivalent}, but they have different dominating pairs, and are therefore not \textit{identical}. 
  Under this perspective, the order relation $\succeq$ formally loses its antisymmetry property, since $\mech \succeq \widetilde \mech$ and $\widetilde \mech \succeq \mech$ no longer implies that $\mech = \widetilde \mech$ but rather $\mech \equiv \widetilde \mech$, and thus should be referred to as a \textit{preorder}.
  Moreover, since under this treatment, $\deltatwosided{\mech}{\mech}=0$ implies $\Leftrightarrow \mech \equiv \widetilde\mech$ rather than $\Leftrightarrow \mech = \widetilde\mech$, $\Delta^{\leftrightarrow}$ should be referred to as a \textit{pseudometric} (which assigns zero value to \textit{non-identical} (but equivalent) elements).
  We stress that the discussed distinction is largely terminological and does not change any of the results of the paper.
\end{remark}

\extremeness*
\begin{proof}
  ($\mech \succeq \mpp$):
  We have $R_{\min}^{\mathrm{PP}} \geq \rmin$, since, by definition, $R_{\min}^{\mathrm{PP}}(\pi) = \min\{\pi, 1-\pi\}$, and the Bayes error function of any mechanism satisfies $\rmin(\pi) \leq \min\{\pi, 1-\pi\}$, for all $\pi\in[0,1]$. Thus, by \cref{thm:blackwell}, $\mpp$ is Blackwell dominated by any mechanism.
  
  ($\mbnp \succeq \mech$):
  By definition, $R_{\min}^{\mathrm{BNP}}(\pi) = 0$ and thus $R_{\min}^{\mathrm{BNP}}(\pi)\leq \rmin(\pi)$, for all $\pi \in[0,1]$. Thus, by \cref{thm:blackwell}, $\mbnp$ Blackwell dominates any mechanism.
\end{proof}

\perfectprivacy*
\begin{proof}
  We denote the Bayes error functions of $\mech,\mech_{\mathrm{PP}}$ as $\rmin, R_{\min}^{\mathrm{PP}}$ respectively. Note that $R_{\min}^{\mathrm{PP}}(\pi) = \min\{\pi,1-\pi\} \geq \rmin(\pi)$, for all $\pi\in[0,1]$.
  Using \cref{lemma:deltadiv-via-bayes} we obtain:
  \begin{equation}
    \deltadiv{\mpp}{\mech} = \max_\pi ( R_{\min}^{\mathrm{PP}}(\pi) -\rmin(\pi)) = \max_{\pi} \left ( \min\left\{\pi, 1-\pi\right\} - \rmin(\pi) \right).
  \end{equation}  
  Next, note that the maximum of $\min\{\pi, 1-\pi\}$ is at $\pi =\nicefrac{1}{2}$ and that all Bayes error functions are concave by definition and their maximum is also realised at $\pi =\nicefrac{1}{2}$.
  Hence, the largest difference between the perfectly private mechanism and any Bayes risk function must also be at $\pi=\nicefrac{1}{2}$.
  We have:
  \begin{align}
    \deltadiv{\mpp}{\mech} &= \nicefrac{1}{2} - \rmin(\nicefrac{1}{2}) \label{rmin-onehalf}\\ 
    &= \nicefrac{1}{2} - \min_{\alpha\in[0,1]}(\nicefrac{1}{2} \alpha + \nicefrac{1}{2}f(\alpha)) \\
    &= \nicefrac{1}{2}\max_{\alpha\in[0,1]}(1 - \alpha - f(\alpha)) \\
    &= \nicefrac{1}{2}\mathsf{Adv} = \nicefrac{1}{2}\dtv(P,Q) = \nicefrac{1}{2}\delta(0).
  \end{align} 
\end{proof}

\blatantnonprivacy*
\begin{proof}
Since the Bayes risk function of $\mbnp$ is $0$ on the unit interval, the $\Delta$-divergence becomes:
  \begin{equation}
    \deltadiv{\mech}{\mbnp} = \max_\pi(\rmin(\pi) - \rmin^{\mathrm{BNP}}(\pi)) = \max_{\pi \in [0,1]} \left ( \rmin(\pi) - 0 \right ) = \max_{\pi \in [0,1]} \rmin(\pi) = \rast,
  \end{equation}
  where we used \cref{lemma:deltadiv-via-bayes} for the first equality. 
  It remains to show that $R^*=\alpha^*$. Recall that $\rmin$ is concave and symmetric around $\pi=1/2$ and assumes its maximum at $\pi=\nicefrac{1}{2}$. 
  To compute $\rmin(\nicefrac{1}{2})$, we set the following derivative equal to $0$:
  \begin{align}
    \frac{\rd}{\rd \alpha} \left[\nicefrac{1}{2}\pi\alpha + \nicefrac{1}{2}(1-\pi f(\alpha))\right] = 0 \iff \frac{\rd}{\rd \alpha} f(\alpha) = -1 \iff \alpha = f(\alpha).\label{fixed-point equation}
  \end{align}
  The last equivalence follows from the fact that $f$ is a symmetric trade-off function. 
  Denote by $\alpha^*$ the unique point in $[0,1]$ such that $\alpha^* = f(\alpha^*)$. 
  Then, we have:
  \begin{align}
    \deltadiv{\mech}{\mbnp} = \rmin^* = \rmin(\nicefrac{1}{2}) = \nicefrac{1}{2}\alpha^* + \nicefrac{1}{2}f(\alpha^*) = \nicefrac{1}{2}\alpha^* + \nicefrac{1}{2}\alpha^* = \alpha^*.\label{optimal-rmin-at-1/2}
  \end{align}  
\end{proof}

\complementaryerrors*
\begin{proof}
  Since $R_{\min}^{\mathrm{PP}}(\pi) = \min\{ \pi, 1-\pi \}$ which has a maximum at $\pi=\nicefrac{1}{2}$, we have from \cref{rmin-onehalf} that $\deltadiv{\mpp}{\mech} = \nicefrac{1}{2} - \rmin(\nicefrac{1}{2})$.
  Moreover, by \cref{optimal-rmin-at-1/2}, we have $\deltadiv{\mech}{\mbnp} = \rmin(\nicefrac{1}{2})$. 
  Therefore, we obtain:
  \begin{equation}
    \deltadiv{\mpp}{\mech} + \deltadiv{\mech}{\mbnp} = \nicefrac{1}{2} - \rmin(\nicefrac{1}{2}) + \rmin(\nicefrac{1}{2}) = \nicefrac{1}{2}.
  \end{equation}
\end{proof}

Before proceeding with \cref{thm:ranking-resolution} and \cref{thm:delta-div-composition-bound}, we prove the following statements, which will be used below:

\begin{alemma}\label{lemma:gaussian-trade-off-ordering}
  If $G_\mu,G_{\widetilde\mu}$ are two Gaussian trade-off functions with $\mu \leq \widetilde\mu$, then $G_\mu\geq G_{\widetilde\mu}$.
  \begin{proof}
    We will prove that the trade-off function of the Gaussian mechanism is decreasing in $\mu$ for any fixed $\alpha$.
  To show this, we take the first derivative of the trade-off function of the Gaussian mechanism with respect to $\mu$:
  \begin{equation}
    \frac{\partial}{\partial \mu} G_\mu(\alpha) = \frac{\partial}{\partial \mu} \Phi(\Phi^{-1}(1-\alpha) - \mu) = - \frac{\sqrt{2} \re^{- \frac{\left(\mu - \sqrt{2} \operatorname{erfinv}{\left(1 - 2 \alpha \right)}\right)^{2}}{2}}}{2 \sqrt{\pi}},
  \end{equation}
  where $\operatorname{erfinv}$ denotes the inverse error function of the normal distribution.
  Since the exponential is always non-negative, the right hand side is always negative.
  Hence:
  \begin{equation}\label{eq:gaussians-dominate}
    \mu \geq \widetilde{\mu} \iff \forall \alpha\in[0,1]:\; G_{\mu}(\alpha) \leq G_{\widetilde{\mu}}(\alpha).
  \end{equation}
  \end{proof}
\end{alemma}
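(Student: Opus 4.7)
The statement is a pointwise monotonicity claim: for every fixed $\alpha \in [0,1]$, the value $G_\mu(\alpha) = \Phi(\Phi^{-1}(1-\alpha) - \mu)$ is weakly decreasing in $\mu$. My plan is to prove this by direct differentiation of $G_\mu(\alpha)$ in $\mu$, holding $\alpha$ fixed. I would first observe that $\Phi^{-1}(1-\alpha)$ is a constant in $\mu$, so by the chain rule $\partial_\mu G_\mu(\alpha) = -\phi(\Phi^{-1}(1-\alpha) - \mu)$, where $\phi$ is the standard normal density. Since $\phi$ is strictly positive on all of $\mathbb{R}$, this derivative is strictly negative whenever $\Phi^{-1}(1-\alpha) - \mu$ is finite, i.e.\ for $\alpha \in (0,1)$. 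Therefore $\mu \mapsto G_\mu(\alpha)$ is strictly decreasing on $(0,1)$, and $\mu \leq \widetilde\mu$ implies $G_\mu(\alpha) \geq G_{\widetilde\mu}(\alpha)$. At the endpoints $\alpha \in \{0,1\}$, both trade-off functions equal $1$ and $0$ respectively, so equality holds trivially, and combining the cases yields $G_\mu \geq G_{\widetilde\mu}$ pointwise.

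\textbf{Main obstacle.} There is essentially no obstacle; the result reduces to monotonicity of $\Phi$ and positivity of $\phi$. The only mild subtlety is the endpoint behaviour where $\Phi^{-1}(1-\alpha) \in \{\pm\infty\}$, which I would handle separately by inspection as noted above. As a conceptual sanity check, an alternative route is to invoke the Karlin–Rubin/monotone likelihood ratio property of the Gaussian location family $\{N(\mu,1)\}$: increasing $\mu$ increases the separation between the null $N(0,1)$ and the alternative $N(\mu,1)$, so the Neyman–Pearson trade-off curve must lie pointwise lower. This gives the same conclusion without calculus, but the direct differentiation is cleaner, more elementary, and matches the computational style used elsewhere in the paper, so that is the route I would present.
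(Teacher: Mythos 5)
Your proposal is correct and follows essentially the same route as the paper: differentiate $G_\mu(\alpha)=\Phi(\Phi^{-1}(1-\alpha)-\mu)$ in $\mu$ for fixed $\alpha$ and note the derivative is the negative of a strictly positive normal density. Your explicit treatment of the endpoints $\alpha\in\{0,1\}$ is a minor refinement the paper glosses over, but the argument is the same.
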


\begin{alemma}\label{lemma:triangle-inequality}
  Let $\mech_1, \mech_2, \mech_3$ be three mechanisms. Then,
    $\deltadiv{\mech_1}{\mech_3} \leq \deltadiv{\mech_1}{\mech_2} + \deltadiv{\mech_2}{\mech_3}$.
\end{alemma}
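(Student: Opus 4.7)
The plan is to exploit \cref{lemma:deltadiv-via-bayes}, which re-expresses the $\Delta$-divergence as a one-sided sup-norm gap between Bayes error functions. Once in this form, the triangle inequality reduces to the elementary additivity of pointwise differences, so no heavy machinery is needed.

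Concretely, denote the Bayes error functions of $\mech_1,\mech_2,\mech_3$ by $R^{(1)}_{\min},R^{(2)}_{\min},R^{(3)}_{\min}$. By \cref{lemma:deltadiv-via-bayes}, $\deltadiv{\mech_i}{\mech_j}=\max_{\pi\in[0,1]}\bigl(R^{(i)}_{\min}(\pi)-R^{(j)}_{\min}(\pi)\bigr)$. For any fixed $\pi\in[0,1]$ I would telescope
\begin{equation*}
R^{(1)}_{\min}(\pi)-R^{(3)}_{\min}(\pi)=\bigl(R^{(1)}_{\min}(\pi)-R^{(2)}_{\min}(\pi)\bigr)+\bigl(R^{(2)}_{\min}(\pi)-R^{(3)}_{\min}(\pi)\bigr),
\end{equation*}
and bound each summand by its maximum over $\pi$, yielding $R^{(1)}_{\min}(\pi)-R^{(3)}_{\min}(\pi)\leq\deltadiv{\mech_1}{\mech_2}+\deltadiv{\mech_2}{\mech_3}$. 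Taking the maximum of the left-hand side over $\pi$ and invoking \cref{lemma:deltadiv-via-bayes} once more closes the argument.

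An equivalent route I would mention as an alternative proceeds directly from clause (1) of \cref{thm:approx-comparison}: if $\kappa_1\geq\deltadiv{\mech_1}{\mech_2}$ and $\kappa_2\geq\deltadiv{\mech_2}{\mech_3}$, substituting $\alpha\mapsto\alpha+\kappa_2$ in the first inequality and chaining with the second gives $f_1(\alpha+\kappa_1+\kappa_2)-(\kappa_1+\kappa_2)\leq f_3(\alpha)$ for all $\alpha$, so $\deltadiv{\mech_1}{\mech_3}\leq\kappa_1+\kappa_2$; then passing to the infimum over admissible $\kappa_1,\kappa_2$ yields the claim.

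There is no genuine obstacle here: the work has already been done in establishing \cref{lemma:deltadiv-via-bayes} and the equivalences of \cref{thm:approx-comparison}. The only modest care I would take is ensuring the maxima used in \cref{lemma:deltadiv-via-bayes} are actually attained (which holds since $R_{\min}$ is continuous on the compact interval $[0,1]$), so that the telescoping and the infimum arguments are both rigorous.
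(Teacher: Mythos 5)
Your main argument—rewriting each $\Delta$-divergence via \cref{lemma:deltadiv-via-bayes} as a maximum of Bayes-error differences, telescoping, and bounding each summand by its own maximum—is exactly the paper's proof of \cref{lemma:triangle-inequality}, and it is correct. The alternative route via shifted trade-off functions is a fine observation but adds nothing beyond what the paper already does.
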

\begin{proof}
  Let $\mech_1, \mech_2, \mech_3$ be three mechanisms and $\rmin^1, \rmin^2, \rmin^3$ their respective Bayes error functions.
  Using \cref{lemma:deltadiv-via-bayes}, we have:
  \begin{align}\label{eq:triangle-inequality}
    &\deltadiv{\mech_1}{\mech_3} = 
    \max_{\pi} (\rmin^1(\pi) - \rmin^3(\pi)) 
    \\&= \max_{\pi} (\rmin^1(\pi) -\rmin^2(\pi) + \rmin^2(\pi) - \rmin^3(\pi)) \\
    &\leq \max_{\pi} (\rmin^1(\pi) -\rmin^2(\pi)) + \max_{\pi} (\rmin^2(\pi) - \rmin^3(\pi)) \\
    &\leq \max_{\pi} (\rmin^1(\pi) -\rmin^2(\pi)) + \max_{\pi} (\rmin^2(\pi) - \rmin^3(\pi)) \\
    & =\deltadiv{\mech_1}{\mech_2} + \deltadiv{\mech_2}{\mech_3}.
  \end{align}
\end{proof}

We now proceed with the proofs of \cref{thm:ranking-resolution} and \cref{thm:delta-div-composition-bound} in the main manuscript.
\rankingresolution*
\begin{proof}
    Denote by $ f_{N1} \otimes \dots \otimes f_{NN}$ and $\widetilde f_{N1} \otimes \dots \otimes \widetilde f_{NN}$ the trade-off functions of the compositions $\mech_{N1}\otimes\dots\otimes \mech_{NN}$ and $\widetilde\mech_{N1}\otimes\dots\otimes \widetilde\mech_{NN}$ respectively.
     Next, we apply Theorem 6 in \citet{dong2022gaussian}, which states that these trade-off functions uniformly converge to the Gaussian trade-off functions $G_{2K/s}$ and $G_{2\widetilde K/\widetilde s}$ respectively, i.e.\@
    \begin{align}
      & \lim_{N\to\infty} f_{N1} \otimes \dots \otimes f_{NN} 
      = G_{2K/s},\\
      &\lim_{N\to\infty} \widetilde f_{N1} \otimes \dots \otimes \widetilde f_{NN} = G_{2\widetilde K/\widetilde s}.
    \end{align}
     Suppose $\nicefrac{2K}{s} > \nicefrac{2\widetilde K}{\widetilde s}$ holds.     By \cref{lemma:gaussian-trade-off-ordering}, we then have $G_{2K/s} < G_{2\widetilde K/\widetilde s}$. Moreover, we have:
    \begin{align}
      \lim_{N\to\infty} f_{N1} \otimes \dots \otimes f_{NN} 
      = G_{2K/s}(\alpha) 
      < G_{2\widetilde K/\widetilde s}(\alpha)  = \lim_{N\to\infty} \widetilde f_{N1} \otimes \dots \otimes \widetilde f_{NN}(\alpha) ,
    \end{align}
    where the limits converge uniformly in $\alpha$. In particular, since the limits converge uniformly and are strictly ordered, there must exist $N^*$ such that for all $N\geq N^*$:
    \begin{align}
    f_{N1} \otimes \dots \otimes f_{NN} \leq \widetilde f_{N1} \otimes \dots \otimes \widetilde f_{NN} .
    \end{align}
    This shows if $\nicefrac{2K}{s} > \nicefrac{2\widetilde K}{\widetilde s}$, then there must exist $N^*$ such that for all $N\geq N^*$:
    \begin{equation}
      \mech_{N1}\otimes\dots\otimes \mech_{NN} \succeq \widetilde\mech_{N1}\otimes\dots\otimes \widetilde\mech_{NN}
    \end{equation}
\end{proof}

\deltadivcompositionbound*
\begin{proof}
Assume $\nicefrac{N}{\widetilde N} \geq \nicefrac{\eta}{\widetilde \eta}$. 
Let $\mech_G,\widetilde\mech_{G}$ be two Gaussian mechanisms with trade-off functions $G_{\mu},G_{\widetilde\mu}$ respectively, where:
\begin{align}
  &\mu = \frac{\sqrt{N}2v_1}{\sqrt{v_2 - v_1^2}} = \sqrt{N}2\eta\\
  &\widetilde\mu = \frac{\sqrt{\widetilde N}2\widetilde v_1}{\sqrt{\widetilde v_2 - \widetilde v_1^2}} = \sqrt{\widetilde N} 2 \widetilde\eta.
\end{align}
Between two Gaussian trade-off functions the one with the smaller mean parameter has a larger trade-off function:
\begin{align}
  G_\mu \leq G_{\widetilde\mu}\iff \mu \geq \widetilde\mu \iff \sqrt{N} \eta \geq \sqrt{\widetilde N}\widetilde\eta \iff \nicefrac{N}{\widetilde N} \geq \nicefrac{\eta}{\widetilde \eta}
\end{align}
Since we assumed $\nicefrac{N}{\widetilde N} \geq \nicefrac{\eta}{\widetilde \eta}$, we also have $G_\mu \leq G_{\widetilde\mu}$. In particular, this implies $\deltadiv{\mech_{G}}{\widetilde\mech_{G}}=0$. Next, we apply the triangle inequality from \cref{lemma:triangle-inequality} and obtain:
\begin{align}
  \deltadiv{\mech^{\otimes N}}{\widetilde\mech^{\otimes \widetilde N}} &\leq 
    \deltadiv{\mech^{\otimes N}}{\mech_G} + \deltadiv{\mech_G}{\widetilde\mech^{\otimes \widetilde N}} \\
    & \leq\deltadiv{\mech^{\otimes N}}{\mech_G} + \deltadiv{\mech_G}{\widetilde{\mech}_G} + \deltadiv{\widetilde{\mech}_G}{\widetilde\mech^{\otimes \widetilde N}}\\
    &=\deltadiv{\mech^{\otimes N}}{\mech_G} + \deltadiv{\widetilde{\mech}_G}{\widetilde\mech^{\otimes \widetilde N}}.
\end{align}
To bound the last two summands, we apply Theorem 5 in \cite{dong2022gaussian}, which gives that for all $\alpha\in[0,1]$:
\begin{align}
  &G_\mu(\alpha+\gamma)-\gamma\leq f^{\otimes N}(\alpha)\geq G_\mu(\alpha-\gamma)+\gamma,\\
  &G_{\widetilde\mu}(\alpha+\widetilde\gamma)-\widetilde\gamma\leq \widetilde f^{\otimes \widetilde N}(\alpha) \leq G_{\widetilde\mu}(\alpha-\widetilde\gamma)+\widetilde\gamma,\label{ineq:shift}
\end{align}
where
\begin{align}
  &\gamma = \frac{0.56v_3}{\sqrt{N}(v_2 - v_1^2)^{\nicefrac{3}{2}}}, \\
  & \widetilde\gamma = \frac{0.56\widetilde v_3}{\sqrt{\widetilde N}(\widetilde v_2 - \widetilde v_1^2)^{\nicefrac{3}{2}}}.
\end{align}
In particular, applying a shift by $\widetilde\gamma$ in the last inequality in \cref{ineq:shift} gives for all $\alpha\in[0,1]$:
\begin{align}
  &G_\mu(\alpha+\gamma)-\gamma\leq f^{\otimes N}(\alpha)\;\;\textrm{and}\;\;  \widetilde f^{\otimes \widetilde N}(\alpha + \widetilde\gamma) -\widetilde\gamma \leq G_{\widetilde\mu}(\alpha).
\end{align}
Next, note the definition of the $\Delta$-divergence via the infimum to see that the above implies: 
\begin{align}
  \deltadiv{\mech^{\otimes N}}{\mech_G}\leq \gamma \;\;\textrm{and}\;\; \deltadiv{\widetilde{\mech}_G}{\widetilde\mech^{\otimes \widetilde N}}\leq \widetilde \gamma.
\end{align}
Moreover, we can write $\gamma,\widetilde\gamma$ in terms of $\eta,\widetilde\eta$ respectively:
\begin{align}
  \gamma = \frac{0.56\eta^3v_3}{\sqrt{N}v_1^3}\;\;\textrm{and}\;\; \widetilde\gamma = \frac{0.56\widetilde\eta^3\widetilde v_3}{\sqrt{\widetilde N}\widetilde v_1^3}.
\end{align}
Thus, we have:
\begin{align}
  \deltadiv{\mech^{\otimes N}}{\widetilde\mech^{\otimes \widetilde N}} \leq \gamma + \widetilde \gamma = 0.56\left(\frac{\eta^3v_3}{\sqrt{N}v_1^3} + \frac{\widetilde \eta^3 \widetilde v_3}{\sqrt{\widetilde N}\widetilde v_1^3}\right).
\end{align}
For $N=\widetilde N$ our result above becomes:
\begin{align}
  \deltadiv{\mech^{\otimes N}}{\widetilde\mech^{\otimes \widetilde N}} \leq \frac{0.56}{\sqrt{N}}\left(\frac{\eta^3v_3}{v_1^3} + \frac{\widetilde \eta^3 \widetilde v_3}{\widetilde v_1^3}\right).
\end{align}
\end{proof}

\end{document}